\documentclass[lettersize,journal,twoside]{IEEEtran}

\usepackage{amsmath,amsfonts}
\usepackage{algorithmic}
\usepackage{array}
\usepackage[caption=false,font=normalsize,labelfont=sf,textfont=sf]{subfig}
\usepackage{textcomp}
\usepackage{stfloats}
\usepackage{url}
\usepackage{verbatim}
\usepackage{graphicx}
\def\BibTeX{{\rm B\kern-.05em{\sc i\kern-.025em b}\kern-.08em
    T\kern-.1667em\lower.7ex\hbox{E}\kern-.125emX}}
\usepackage{balance}

\usepackage{multirow}
\usepackage{booktabs}

\usepackage{graphics}
\usepackage{epsfig} 
\usepackage{xcolor}
\usepackage{amssymb}  
\usepackage{amsthm}
\usepackage{euscript}
\usepackage{xcolor}
\usepackage{tasks}
\usepackage{hyperref}
\usepackage{paralist}
\usepackage{comment}
\usepackage{amsmath} 
{
\newtheorem{theorem}{\bf Theorem}
\newtheorem{lemma}{\bf Lemma}

\newtheorem{problem}{\bf Problem}

\newtheorem{assumption}{\bf Assumption}
\newtheorem{remark}{\bf Remark}

}

\newcommand{\trans}{\cdot}

\newcommand{\vect}[1]{\boldsymbol{#1}}
\newcommand{\bvect}[1]{\Bar{\boldsymbol{#1}}}

\renewcommand{\Bar}[1]{#1}

\newcommand{\GR}[1]{#1}

\begin{document}

\pagestyle{empty}

\title{Safe whole-body backstepping control for quadcopter path-following}

\author{Arthur H. D. Nunes, Arthur Da C. Vangasse, Guilherme V. Raffo,\\Vinicius M. Gon\c{c}alves, and Luciano C. A. Pimenta%
\thanks{The authors are with the Graduate Program in Electrial Engineering (PPGEE) of the Universidade Federal de Minas Gerais (UFMG), Belo Horizonte, MG, 31270-901, Brazil (emails: arthurhdn@ufmg.br; vangasse@ufmg.br; raffo@ufmg.br; leotorres@ufmg.br; lucpim@cpdee.ufmg.br). The first author is affiliated with NUAR Systems LTDA, Contagem, MG, Brazil.}%
\thanks{This work was in part supported by the grants CNPq (Brazilian National Research Council) 309925/2023-1 and 301298/2025-4, and FAPEMIG (under grants APQ-02144-18 and APQ-0063023). This work was partially supported by Petrobras/ANP under grants 2023/00494-5 and 2023/00643-0.}%
}



\maketitle
\thispagestyle{empty}

\begin{abstract}
This paper presents a novel whole-body Backstepping control strategy for safe quadcopter path-following. The proposed approach introduces an integrated control scheme that combines a translational guidance controller with a rigid-body attitude controller. To guarantee asymptotic path convergence, the method utilizes a nominal Integrated Guidance and Control (IGC) based on Artificial Vector Fields (AVF). To ensure reactive safety and collision avoidance, the control law is modified using a smooth distance function within the High-Order Control Barrier Function (HOCBF) framework. The quadcopter dynamics are modeled using quaternion algebra to represent position, velocity, and attitude. By combining the Backstepping approach with HOCBF, the controller guarantees that the vehicle avoids obstacle sets while successfully converging to the target path when unobstructed. The proposed methodology is validated through software-in-the-loop simulations and real-world experimental results using the Crazyflie platform.
\end{abstract}

\begin{IEEEkeywords}
Path-following, Quadcopter, Robot navigation, Safe control, Integrated Guidance and Control.
\end{IEEEkeywords}

\section{Introduction}

\IEEEPARstart{A}{utonomous} navigation in dynamic environments \GR{represents} a timely and important challenge in robotics and control. \GR{In particular}, Unmanned Aerial Vehicles (UAVs), such as quadcopters, have \GR{gained widespread adoption across} a variety of applications, including surveillance, \GR{environmental} monitoring, search and rescue, and autonomous delivery \cite{8384677,7487716,8714418,miranda2022autonomous}. 
%
%
For many of these tasks, path-following strategies are \GR{often} preferred over strict trajectory-tracking approaches, as they generally yield smoother convergence to the target path \GR{while requiring} less aggressive control efforts \cite{rubi2020survey}. 

To address the path-following problem, Artificial Vector Fields (AVFs) have been widely adopted as a guidance strategy \cite{yao2022guiding,yao2021singularity}. Constructive AVF methods \cite{rezende2021constructive} successfully \GR{guide} robots along dynamic, time-varying curves in $n$-dimensional spaces \cite{rezende2021constructive}. Furthermore, obstacle avoidance has been integrated into these methods by introducing reactive fields that circumnavigate the obstacle when a collision risk is detected \cite{nunes2022vector}. Despite their effectiveness, a major limitation of many traditional reactive avoidance approaches is their reliance on standard Euclidean distance function. Because the function may not be differentiable \GR{everywhere}, they are often non-smooth and result in discontinuous vector field control laws.
Such discontinuous guidance laws implicitly demand infinite acceleration, which cannot be achieved by actual rigid-body dynamics. 

To bridge this gap, recent methods \GR{have proposed} smooth distance functions \GR{alongside} Control Barrier Functions (CBFs) to guarantee reactive safety and collision avoidance \GR{within} a continuous control \GR{framework} \cite{ferreira2026pointtocloudnmpcsmoothavoidance,11358638,10530410,nunes2026safe}.

While high-level guidance algorithms dictate the desired kinematics, they must be coupled with low-level controllers to handle the underactuated rigid-body dynamics of the vehicle. Traditionally, this is accomplished using a cascaded architecture where the guidance layer is \GR{decoupled} from the control layer \cite{raffo2010integral,rezende2020robust,pereira2021collision}. \GR{Standard} linear controllers\GR{,} such as PID and LQR \cite{ozbek2016feedback,bouabdallah2004pid,cowling2007prototype}\GR{,} are frequently employed \GR{in these schemes} but struggle to maintain adequate performance during aggressive maneuvers. To overcome these limitations, nonlinear control techniques like Backstepping \cite{khalil2002nonlinear} have proven highly effective for quadcopters \cite{raffo2015robust,salierno2018whole,TAN201647,5428769,5531424}. However, \GR{existing literature predominantly employees} Backstepping \GR{within} a trajectory-tracking \GR{context}. 

In recent \GR{works} \cite{nunes2023integrated,nunes2026saferobusttubebasedpathfollowing}, Backstepping \GR{has been employed for path-following control} by integrating the AVF \cite{nunes2022vector}. \GR{While building upon a similar concept}, our \GR{approach incorporates the vehicle's} whole-body dynamics, \GR{enabling} formal proofs for \GR{both} safety and \GR{full-}system convergence. Furthermore, the \GR{proposed} method \GR{leverages a} smooth distance function \cite{nunes2026safe} \GR{within the HOCBF framework} \cite{8796030}.
To the best of the author's knowledge, this is the first work \GR{to achieve safe} path-following with obstacle avoidance \GR{by unifying} whole-body Backstepping control\GR{,} Artificial Vector Fields\GR{, and HOCBFs} for quadcopters\GR{, backed by convergence and safety} formal proofs \GR{under the complete dynamic model} for time-varying paths.


\GR{The main contributions of this work are summarized as follows:
\begin{itemize}    
    \item A novel whole-body path-following control scheme that seamlessly integrates translational guidance with quadcopter rigid-body dynamics using a Backstepping approach;
    \item An extension of a smooth distance function into a High-Order Control Barrier Function framework to guarantee reactive safety and collision avoidance;
    \item Formal mathematical proofs establishing closed-loop asymptotic path-following convergence and forward invariance (safety) for the full non-linear dynamic model;
    \item Experimental validation demonstrating the real-time performance and effectiveness of the proposed strategy through numerical simulations and real-world flight tests on a quadcopter platform.
\end{itemize}}

\subsection*{\GR{Notation}}
In this work\GR{,} we use quaternion algebra to describe the \GR{quadcopter} dynamics. Let $\mathbb{H}$ \GR{denote the set of} quaternion\GR{, defined as $
    \mathbb{H} = \{\boldsymbol{h} = h_0 + \hat{i}h_1 + \hat{j}h_2 + \hat{k}h_3: h_0,h_1,h_2,h_3 \in \mathbb{R} \}$,}
where $\hat{i}^2 = \hat{j}^2 = \hat{k}^2 = \hat{i}\hat{j}\hat{k} = -1$ are imaginary identities. 
The real and imaginary operators for quaternions are defined by $\mathrm{Re}(\boldsymbol{h}) = h_0$ and $\mathrm{Im}(\boldsymbol{h}) = \hat{i}h_1 + \hat{j}h_2 + \hat{k}h_3$, respectively\GR{, with the quaternion} conjugate \GR{given by} $\boldsymbol{h}^* = \mathrm{Re}(\boldsymbol{h}) - \mathrm{Im}(\boldsymbol{h})$. 
Let $\mathbb{H}_p = \{ \boldsymbol{h} \in \mathbb{H}: \mathrm{Re}(\boldsymbol{h}) = 0 \}$ \GR{represent} the set of \GR{pure} quaternions \GR{(i.e. quaternions with null real part).}  
This set \GR{forms a} 3-dimensional real vector subspace isomorphic to $\mathbb{R}^3$. We \GR{exploit this property} to represent \GR{positions} and velocities as pure quaternions, while reserving unit-norm quaternions \GR{to concisely represent} attitude. \GR{Within this framework}, the cross and dot products between two pure quaternions \GR{are defined} as $\boldsymbol{u} \times \boldsymbol{v} = \frac{\boldsymbol{u}\boldsymbol{v} - \boldsymbol{v}\boldsymbol{u}}{2}$ and $\boldsymbol{u} \cdot \boldsymbol{v} = -\frac{\boldsymbol{u}\boldsymbol{v} + \boldsymbol{v}\boldsymbol{u}}{2}$, respectively.
%
%
See \cite{silva2022dynamics,adorno2017robot,abaunza2017dual} for more details.

\section{System modeling}\label{sec:system_modeling}

The vehicle \GR{considered in} this work \GR{is a quadcopter featuring} four rotors symmetrically arranged in a \GR{single} plane. Each rotor generates \GR{an individual} force\GR{, and their} combination \GR{yields} the total thrust $u_T \in \mathbb{R}$ and the 3D \GR{control} torques $\boldsymbol{u}_\tau \in \mathbb{H}_p$, \GR{which constitute} the vehicle's control inputs \cite{raffo2010integral}. 


To \GR{formulate the system dynamics, let} $\boldsymbol{p} \in \mathbb{H}_p$ \GR{denote} the position and $\boldsymbol{v} \in \mathbb{H}_p$ the linear velocity \GR{in the inertial frame. Furthermore, let} $\boldsymbol{o} \in \mathbb{H}$ \GR{with}  $\|\boldsymbol{o}\| = 1$ \GR{represent the unit quaternion attitude,} and $\boldsymbol{\omega} \in \mathbb{H}_p$ \GR{be} the angular velocity \GR{in the body frame}. 
Given the states $\boldsymbol{p}, \boldsymbol{v}, \boldsymbol{o}$, and $\boldsymbol{\omega}$, \GR{the nominal} dynamic model \GR{of} the vehicle is given by
%
\begin{align}
    \Dot{\Bar{\boldsymbol{p}}} & =  \Bar{\boldsymbol{v}},\label{eq:nominal-model1}\\
    \Dot{\Bar{\boldsymbol{v}}} & =  -g\hat{\boldsymbol{k}} + m^{-1}\Bar{u}_T\Hat{\boldsymbol{k}}_{\Bar{B}}, \quad \Hat{\boldsymbol{k}}_{\Bar{B}} = \Bar{\boldsymbol{o}}\Hat{\boldsymbol{k}}\Bar{\boldsymbol{o}}^*,\label{eq:nominal-model2}\\
    \Dot{\Bar{\boldsymbol{o}}} & =  \frac{1}{2}\Bar{\boldsymbol{o}}\Bar{\boldsymbol{\omega}}, \label{eq:nominal-model3}\\
     \Dot{\Bar{\boldsymbol{\omega}}} & =  -J^{-1}(\Bar{\boldsymbol{\omega}} \times J\Bar{\boldsymbol{\omega}}) + J^{-1} \Bar{\boldsymbol{u}}_\tau.\label{eq:nominal-model4}
\end{align}
where $m \in \mathbb{R}$ \GR{denotes the total mass of the quadcopter}, $J \in \mathbb{R}^{3 \times 3}$ \GR{is its} inertia tensor\GR{, $g$ is the gravitational acceleration, and $\hat{\boldsymbol{k}}$ represents the vertical unit vector}.




\section{Problem statement}

The primary \GR{objective} considered in this work is to \GR{design a control law that drives} the vehicle to follow a \GR{predefined} target \GR{path,} as formalized next. 
\begin{problem}[Path-following, adapted from \cite{nunes2026saferobusttubebasedpathfollowing}]
    %
    Consider the system \eqref{eq:nominal-model1}-\eqref{eq:nominal-model4} and a time-varying target path $\vect{\mathcal{C}}(t)$ \GR{parametrized by} $\vect{c}(s,t)$. 
    \GR{The control objective is to design} a control law for $u_T$ and $\vect{u}_\tau$ such that the \GR{vehicle position ${\vect{p}}$ converges to and follows} the target path $\vect{\mathcal{C}}(t)$ \GR{in unobstructed environments}.
    \label{problem:path_following2}
\end{problem}

When the \GR{target} path is \GR{obstructed} and a collision is at risk, the controller must prioritize safety over \GR{path-}following. \GR{Consequently,} the \GR{vehicle} is allowed to \GR{deviate} from the curve to accomplish \GR{obstacle avoidance, resuming} the primary \GR{path-following objective} once the \GR{obstacle is cleared}. 
To \GR{formalize} the safety \GR{requirement}, let $\vect{\mathcal{O}}(t) \subset \mathbb{H}_p$ \GR{represent} a \GR{time-varying} 
obstacle set. \GR{We assume} that this set is represented by a finite number of time-varying points \(\vect{o}_i(t) \in \vect{\mathcal{O}}(t)\), in which the \GR{total} number of points \(N\) \GR{remains constant}, and each function \(\vect{o}_i(t)\) is differentiable in \(t\) \cite{nunes2026safe,nunes2026saferobusttubebasedpathfollowing}.

Let the half-squared distance between the point $\vect{p}$ and the set $\vect{\mathcal{O}}(t)$ be defined as
\begin{equation}
D_{\vect{\mathcal{O}}}(\vect{p},t) \equiv \min_{i=1,...,N}\frac{1}{2}\|\vect{p} - \vect{o}_i(t)\|^2.
\label{eq:half_squared_distance}
\end{equation}
%
Then, this problem is defined as follows.
\begin{problem}[Safety, adapted from \cite{nunes2026safe}]
    The vehicle must keep at least a safe distance $\lambda>0$ from the obstacle set, \emph{i.e.} 
    \begin{align}
        \vect{p}(t) & \in \vect{\mathcal{S}}(t) \quad \forall t \geq 0,
        \label{eq:condition_collision_avoidance}
    \end{align}
   where $\vect{\mathcal{S}}(t)$ is a safety set defined using barrier certificates \cite{8796030}
    \begin{align}
        \vect{\mathcal{S}}(t) = & \{ \vect{p} \in \mathbb{H}_p \ | \ B^\lambda(\vect{p},t) \geq 0\},\label{eq:safety_set}\\
        B^\lambda(\vect{p},t) & \equiv D_{\vect{\mathcal{O}}}(\vect{p},t) - \frac{\lambda^2}{2}.\label{eq:safety_set_barrier_function}
    \end{align}
    \label{problem:collision_avoidance}
\end{problem}

\GR{The core strategy is to prioritize Problem \ref{problem:collision_avoidance} at all times, addressing Problem \ref{problem:path_following2} whenever the vehicle is in a safe state.}


\section{Safe Backstepping Control}



\subsection{Path-following}\label{subsec:backstepping}

We first \GR{address} path-following in obstacle-free \GR{scenarios}. \GR{To this end, we} adopt a Backstepping \GR{strategy that directly integrates the} Artificial Vector Field guidance \GR{scheme} from \cite{rezende2021constructive} into the \GR{control formulation}.

\subsubsection{Step 1 \GR{- Kinematic Guidance Law}}
In \GR{the first} step, consider \GR{the translational kinematics} $\Dot{\Bar{\vect{p}}} = \Bar{\vect{v}}$, where $\Bar{\vect{v}}$ \GR{acts as a virtual control input}. This virtual control law \GR{is designed according to} the AVF \GR{methodology proposed in} \cite{rezende2021constructive}. 

The vector field virtual control law for $\Bar{\vect{v}}$\GR{, denoted as} $\vect{\Phi}(\Bar{\vect{p}},t)$\GR{, is computed by identifying} the closest point \GR{on} the path, \GR{its distance to the vehicle}, and tangent vectors\GR{, as follows}%
\begin{align}
    s_*(\bvect{p},t) & = \arg \min_s \| \bvect{p} - \vect{c}(s,t) \|,\label{eq:s_star} \\
     \vect{c}_*(\bvect{p},t) & = \vect{c}(s_*(\bvect{p},t),t),\label{eq:c_star}\\
     \vect{D}_{\vect{\mathcal{C}}}(\bvect{p},t) & = \bvect{p} - \vect{c}_*(\bvect{p},t), \label{eq:curve_distance_vector} \\
     \vect{T}_{\vect{\mathcal{C}}}(\bvect{p},t) & = \dfrac{\partial \vect{c}(s,t)}{\partial s}\biggr\rvert_{s = s_*}.\label{eq:field_T}
\end{align}


Two vectors \GR{representing} the convergence and circulation behaviors of the field are weighted by scalar gains $G(\|\vect{D}_{\vect{\mathcal{C}}}\|) = \frac{2}{\pi}\tan^{-1}(k_G\|\vect{D}_{\vect{\mathcal{C}}}\|), \ k_G > 0,$ and $H(\|\vect{D}_{\vect{\mathcal{C}}}\|) = \pm\sqrt{1 - \left[G(\|\vect{D}_{\vect{\mathcal{C}}}\|)\right]^2}$, respectively.
Then, we compute a static unit field and a feed-forward component, \GR{which are given by}
\begin{align}
        \vect{\Phi}_S & = -G\frac{\vect{D}_{\vect{\mathcal{C}}}}{\|\vect{D}_{\vect{\mathcal{C}}}\|} + H\frac{\vect{T}_{\vect{\mathcal{C}}}}{\|\vect{T}_{\vect{\mathcal{C}}}\|}, \quad
        \boldsymbol{\Phi}_T = -\Pi_T\frac{\partial \boldsymbol{D}_{\vect{\mathcal{C}}}}{\partial t},
    \label{eq:static_and_time_field}
\end{align}
\GR{with $\Pi_T$ being} the null space projector of $\vect{T}_{\vect{\mathcal{C}}}$.

To weight the sum of the components defined above, \GR{an additional} scalar gain $\eta(\bvect{p},t) \in [0, v_r]$ is \GR{introduced} to ensure \GR{that} $\|\vect{\Phi}(\bvect{p},t)\| = v_r$\GR{,} which is a \GR{specified} reference speed. This gain is given by
\begin{gather}
    \eta = -\vect{\Phi}_S \cdot \vect{\Phi}_T + \sqrt{(\vect{\Phi}_S \cdot \vect{\Phi}_T)^2 + v_r^2 - \|\vect{\Phi}_T\|^2}.
\end{gather}
\GR{As established in} \cite{rezende2021constructive}, $\eta$ is \GR{well-}defined for \GR{paths whose speed is} slower than the \GR{vehicle's} velocity, \GR{i.e.} $\|\boldsymbol{\Phi}_T\|<v_r$.

Finally, the field $\vect{\Phi}(\bvect{p},t)$ is given by
\begin{equation}
    \begin{array}{rcl}
        \vect{\Phi}(\bvect{p},t) & = & \eta(\bvect{p},t)\vect{\Phi}_S(\bvect{p},t) + \vect{\Phi}_T(\bvect{p},t).
    \end{array}
    \label{eq:field}
\end{equation}

\begin{lemma}[Adapted from Proposition 2 of \cite{rezende2021constructive}]
    For the nominal system \eqref{eq:nominal-model1}, if $\bvect{v} = \vect{\Phi}(\bvect{p},t)$, the trajectories of $\bvect{p}$ converge to and follow the target curve $\vect{\mathcal{C}}(t)$.
    \label{lemma:vector_field_convergence}
\end{lemma}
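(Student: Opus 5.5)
Since the statement is adapted from Proposition 2 of \cite{rezende2021constructive}, the plan is to rebuild that argument in the present quaternion notation, with the half-squared distance to the curve as a Lyapunov-like function. Define
\[
V(\bvect{p},t) = \tfrac{1}{2}\|\vect{D}_{\vect{\mathcal{C}}}(\bvect{p},t)\|^2 .
\]
Its time derivative along $\dot{\bvect{p}} = \vect{\Phi}(\bvect{p},t)$ will be computed, and the goal is to show $\dot V<0$ whenever $\vect{D}_{\vect{\mathcal{C}}}\neq 0$.

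\textbf{Step 1 (gradient of the distance).} Differentiate $V$ using the optimality of $s_*$ in \eqref{eq:s_star}. At the minimizer, $\vect{D}_{\vect{\mathcal{C}}}\cdot \vect{T}_{\vect{\mathcal{C}}}=0$, so every term arising from $\partial s_*/\partial \bvect{p}$ and $\partial s_*/\partial t$ vanishes (an envelope-theorem argument). This gives
\[
\dot V = \vect{D}_{\vect{\mathcal{C}}}\cdot \dot{\bvect{p}} + \vect{D}_{\vect{\mathcal{C}}}\cdot \tfrac{\partial \vect{D}_{\vect{\mathcal{C}}}}{\partial t},
\]
where $\partial \vect{D}_{\vect{\mathcal{C}}}/\partial t$ denotes the explicit time variation at frozen $s$. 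To justify this step, $s_*$ must be unique and differentiable. I will assume, as in \cite{rezende2021constructive}, that the curve is smooth and that the vehicle lies in the region where the closest point is unique.

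\textbf{Step 2 (substitute the field).} Insert $\vect{\Phi}=\eta\vect{\Phi}_S+\vect{\Phi}_T$.
\begin{itemize}
\item Because $\vect{D}_{\vect{\mathcal{C}}}\perp\vect{T}_{\vect{\mathcal{C}}}$, the $H$ term contributes nothing.
\item The $G$ term contributes $-\eta G\|\vect{D}_{\vect{\mathcal{C}}}\|$.
\item Because $\vect{D}_{\vect{\mathcal{C}}}$ lies in the range of $\Pi_T$, we have $\vect{D}_{\vect{\mathcal{C}}}\cdot\vect{\Phi}_T=-\vect{D}_{\vect{\mathcal{C}}}\cdot\partial_t\vect{D}_{\vect{\mathcal{C}}}$, which cancels the explicit time term.
\end{itemize}
Hence
\[
\dot V = -\eta\, G(\|\vect{D}_{\vect{\mathcal{C}}}\|)\,\|\vect{D}_{\vect{\mathcal{C}}}\|,
\]
which is nonpositive because $G\ge 0$ and $\eta\ge 0$.

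\textbf{Step 3 (strict decrease, the main obstacle).} For $\dot V$ to be strictly negative we need $\eta$ bounded below by a positive constant. From the formula for $\eta$, with $\|\vect{\Phi}_S\|=1$ and the assumption $\|\vect{\Phi}_T\|\le v_T<v_r$ uniformly, we get
\[
\eta \ge -\|\vect{\Phi}_T\|+\sqrt{v_r^2-\|\vect{\Phi}_T\|^2+(\vect{\Phi}_S\cdot\vect{\Phi}_T)^2} \ge v_r-v_T>0 .
\]
The last inequality is the delicate part: it needs care in the case where $\vect{\Phi}_S$ and $\vect{\Phi}_T$ are aligned, and the plan there is to use $\sqrt{a^2+b^2-c^2}\ge a-c$ for $|b|\le c$ after rewriting. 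With this bound, $\dot V\le -(v_r-v_T)\,G\,\|\vect{D}_{\vect{\mathcal{C}}}\|<0$ away from the curve. Since $G$ is strictly increasing and positive for $\|\vect{D}\|>0$, a standard comparison argument then gives $\|\vect{D}_{\vect{\mathcal{C}}}\|\to 0$.

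\textbf{Step 4 (following the curve).} Once on the curve, $G=0$ and $|H|=1$, so $\vect{\Phi}=\eta\,\vect{T}_{\vect{\mathcal{C}}}/\|\vect{T}_{\vect{\mathcal{C}}}\|+\vect{\Phi}_T$. This field keeps $\vect{D}_{\vect{\mathcal{C}}}=0$ invariant and has a nonzero tangential component, so the vehicle circulates along the curve. This establishes that it both converges to and follows $\vect{\mathcal{C}}(t)$.
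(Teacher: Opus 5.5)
Your route is the paper's. You use the same Lyapunov function $V_p=\tfrac12\|\vect{D}_{\vect{\mathcal{C}}}\|^2$ and reach the same identity $\dot V_p=-\eta G\|\vect{D}_{\vect{\mathcal{C}}}\|$. The paper imports that identity from Proposition 2 of Rezende et al.\ without proof. You rederive it correctly from the envelope argument, the orthogonality $\vect{D}_{\vect{\mathcal{C}}}\perp\vect{T}_{\vect{\mathcal{C}}}$, and $\Pi_T\vect{D}_{\vect{\mathcal{C}}}=\vect{D}_{\vect{\mathcal{C}}}$. Your Step 3 goes beyond the paper. The paper calls $\dot V_p$ negative definite without noting that this requires $\eta$ to be bounded away from zero. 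Your added uniform assumption $\|\vect{\Phi}_T\|\le v_T<v_r$ is the right way to supply that.

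However, the displayed chain in Step 3 is wrong as written. After you replace $-\vect{\Phi}_S\cdot\vect{\Phi}_T$ by $-\|\vect{\Phi}_T\|$, the remaining expression is not bounded below by $v_r-v_T$. For $\vect{\Phi}_S\perp\vect{\Phi}_T$ and $\|\vect{\Phi}_T\|=v_T$ it equals $-v_T+\sqrt{v_r^2-v_T^2}$, which is strictly less than $v_r-v_T$ and is negative once $v_T>v_r/\sqrt{2}$.

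Two further points follow. The aligned case you flagged is exactly where your chain is tight, not where it breaks. Your planned inequality $\sqrt{a^2+b^2-c^2}\ge a-c$ only delivers $\eta\ge v_r-2\|\vect{\Phi}_T\|$.

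The bound you want is nonetheless true and has a one-line proof. By construction $\|\eta\vect{\Phi}_S+\vect{\Phi}_T\|=v_r$, with $\|\vect{\Phi}_S\|=1$ and $\eta>0$. The triangle inequality then gives $v_r\le\eta+\|\vect{\Phi}_T\|$. Equivalently, $x\mapsto -x+\sqrt{x^2+v_r^2-\|\vect{\Phi}_T\|^2}$ is decreasing on $[-\|\vect{\Phi}_T\|,\|\vect{\Phi}_T\|]$, so its minimum $v_r-\|\vect{\Phi}_T\|$ is attained in the aligned case. With that replacement the rest of your argument goes through.
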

\begin{proof}
    Let $V_p: \mathbb{H}_p \rightarrow \mathbb{R}$ be a candidate Lyapunov function, given by 
    \begin{equation}
        V_p(\boldsymbol{D}_{\boldsymbol{\mathcal{C}}}) = \frac{1}{2} \boldsymbol{D}_{\boldsymbol{\mathcal{C}}}\trans \boldsymbol{D}_{\boldsymbol{\mathcal{C}}},\label{eq:V_p_defintion}
    \end{equation}
    with $\boldsymbol{D}_{\boldsymbol{\mathcal{C}}}$ \GR{defined in} \eqref{eq:curve_distance_vector}.
    According to Proposition 2 of \cite{rezende2021constructive}, when $\bvect{v} = \vect{\Phi}(\bvect{p},t)$, the following result holds:
    \begin{align}
    \Dot{V}_{p}(\boldsymbol{D}_{\boldsymbol{\mathcal{C}}}) & = -\eta G \|\boldsymbol{D}_{\boldsymbol{\mathcal{C}}}\|.\label{eq:Dot_V_p_result}
    \end{align}
Since $\Dot{V}_{p}(\boldsymbol{D}_{\boldsymbol{\mathcal{C}}})$ is negative definite, then $\boldsymbol{D}_{\boldsymbol{\mathcal{C}}} \rightarrow 0$ as $t\rightarrow\infty$, \GR{ensuring that the system} trajectories converge \GR{toward} the path. As $\boldsymbol{D}_{\boldsymbol{\mathcal{C}}} \rightarrow 0$, the circulation component in \eqref{eq:static_and_time_field} becomes dominant\GR{, causing the vehicle to follow the path}.
\end{proof}

\subsubsection{Step 2 \GR{ - Acceleration Control Law}}
\GR{In} the second step, \GR{define the velocity tracking error as} $\vect{z}_v \equiv \Bar{\vect{v}} - \vect{\Phi}$\GR{,} and consider \GR{a virtual acceleration control input} $\Bar{\boldsymbol{a}}$ such that $\Dot{\Bar{\boldsymbol{v}}} = -g \Hat{\boldsymbol{k}} + \Bar{\boldsymbol{a}}$. The \GR{resulting} augmented \GR{translational error dynamics are}
\begin{align}
    \Dot{\vect{D}}_{\vect{\mathcal{C}}} & = \vect{\Phi} - \Dot{\vect{c}}_* + \vect{z}_v,\\
    \Dot{\vect{z}}_v & = -g \Hat{\boldsymbol{k}} + \Bar{\vect{a}} - \Dot{\vect{\Phi}}.
\end{align}

\GR{To stabilize the velocity error and track the guidance vector field, we propose the following virtual acceleration control law:}
\begin{align}
    \Bar{\vect{a}}_{\vect{\Phi}} & = g \Hat{\boldsymbol{k}} - k_p\vect{D}_{\vect{\mathcal{C}}} - k_v\vect{z}_v + \Dot{\vect{\Phi}}, \quad k_p, k_v > 0.
    \label{eq:control_law_acceleration_curve_following}
\end{align}
\begin{lemma}
    \GR{Consider the translational dynamics} \eqref{eq:nominal-model1}-\eqref{eq:nominal-model2}\GR{. If} the virtual control law $\frac{\Bar{u}_T}{m}\Hat{\boldsymbol{k}}_{\Bar{B}} \equiv \Bar{\vect{a}} = \Bar{\vect{a}}_{\vect{\Phi}}$ from \eqref{eq:control_law_acceleration_curve_following} \GR{is applied, then} the \GR{system} trajectories \GR{asymptotically} converge to and follow the target curve $\vect{\mathcal{C}}(t)$.
    \label{lemma:nominal_path_following}
\end{lemma}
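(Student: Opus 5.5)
The argument is a standard Backstepping composite-Lyapunov construction. I will augment the path-error function $V_p$ of Lemma \ref{lemma:vector_field_convergence} with a quadratic term in the velocity error. Concretely, I take
\begin{equation*}
V_1(\vect{D}_{\vect{\mathcal{C}}},\vect{z}_v) = \frac{k_p}{2}\,\vect{D}_{\vect{\mathcal{C}}}\cdot\vect{D}_{\vect{\mathcal{C}}} + \frac{1}{2}\,\vect{z}_v\cdot\vect{z}_v ,
\end{equation*}
which is positive definite in $(\vect{D}_{\vect{\mathcal{C}}},\vect{z}_v)$. I weight $V_p$ by $k_p$ so that the cross term produced by $\vect{z}_v$ in $\Dot{\vect{D}}_{\vect{\mathcal{C}}}$ cancels exactly against the feedback term $-k_p\vect{D}_{\vect{\mathcal{C}}}$ in \eqref{eq:control_law_acceleration_curve_following}.

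First, I differentiate the path part. From the augmented error dynamics, $\Dot{\vect{D}}_{\vect{\mathcal{C}}} = \vect{\Phi} - \Dot{\vect{c}}_* + \vect{z}_v$. The part $\vect{\Phi} - \Dot{\vect{c}}_*$ is exactly what appears when $\bvect{v}=\vect{\Phi}$. Invoking \eqref{eq:Dot_V_p_result} from Lemma \ref{lemma:vector_field_convergence} therefore gives
\begin{equation*}
k_p\,\vect{D}_{\vect{\mathcal{C}}}\cdot\Dot{\vect{D}}_{\vect{\mathcal{C}}} = -k_p\,\eta G\|\vect{D}_{\vect{\mathcal{C}}}\| + k_p\,\vect{D}_{\vect{\mathcal{C}}}\cdot\vect{z}_v .
\end{equation*}
This uses that the closest-point terms satisfy $\vect{D}_{\vect{\mathcal{C}}}\cdot\partial_s\vect{c}=0$ at $s_*$, which is the same fact Proposition 2 of \cite{rezende2021constructive} relies on.

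Second, I substitute $\Bar{\vect{a}}=\Bar{\vect{a}}_{\vect{\Phi}}$ into $\Dot{\vect{z}}_v$. The gravity and $\Dot{\vect{\Phi}}$ terms cancel, leaving $\Dot{\vect{z}}_v=-k_p\vect{D}_{\vect{\mathcal{C}}}-k_v\vect{z}_v$. Summing the two contributions gives
\begin{equation*}
\Dot{V}_1 = -k_p\,\eta G\|\vect{D}_{\vect{\mathcal{C}}}\| - k_v\|\vect{z}_v\|^2 \le 0 .
\end{equation*}
This shows $V_1$ is nonincreasing, so $\vect{D}_{\vect{\mathcal{C}}}$ and $\vect{z}_v$ stay bounded.

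The main obstacle is passing from $\Dot V_1\le 0$ to asymptotic convergence. The system is time-varying, and $\eta G\|\vect{D}_{\vect{\mathcal{C}}}\|$ is only positive when $\vect{D}_{\vect{\mathcal{C}}}\neq 0$ and $\eta>0$. I would use Barbalat's lemma, in these steps:
\begin{itemize}
\item Integrating $\Dot V_1$ shows that $\int_0^\infty\big(k_p\eta G\|\vect{D}_{\vect{\mathcal{C}}}\|+k_v\|\vect{z}_v\|^2\big)dt<\infty$.
\item Boundedness of the states, together with the standing assumption $\|\vect{\Phi}_T\|<v_r$, should give a uniform bound on the derivative of the integrand. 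This keeps $\eta$ bounded away from zero; on the unit-speed structure of $\vect{\Phi}$ with $G\ge0$, one has $\eta\ge v_r-\|\vect{\Phi}_T\|>0$. It also requires smoothness of $\vect{c}$, and I would state that assumption explicitly.
\end{itemize}
Barbalat's lemma then yields $\vect{z}_v\to0$ and $G\|\vect{D}_{\vect{\mathcal{C}}}\|\to0$. Since $G$ is strictly increasing in $\|\vect{D}_{\vect{\mathcal{C}}}\|$, this gives $\vect{D}_{\vect{\mathcal{C}}}\to0$.

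Finally, since $\bvect{v}-\vect{\Phi}=\vect{z}_v\to0$, the velocity asymptotically matches the field. Near the curve, $G\to0$ and $|H|\to1$, so the circulation term dominates exactly as in the proof of Lemma \ref{lemma:vector_field_convergence}. The position therefore converges to and follows $\vect{\mathcal{C}}(t)$.
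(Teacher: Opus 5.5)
Your proposal is correct and is essentially the paper's proof: it uses the same composite function $k_pV_p+\frac{1}{2}\vect{z}_v\cdot\vect{z}_v$, the same reuse of \eqref{eq:Dot_V_p_result}, and the same cancellation giving $\Dot{V}=-k_p\eta G\|\vect{D}_{\vect{\mathcal{C}}}\|-k_v\|\vect{z}_v\|^2$; you also usefully make explicit the orthogonality $\vect{D}_{\vect{\mathcal{C}}}\cdot\partial_s\vect{c}=0$ that lets \eqref{eq:Dot_V_p_result} carry over to $\bvect{v}=\vect{\Phi}+\vect{z}_v$. The paper simply calls this derivative negative definite, so your Barbalat step adds rigor rather than a new route; note that it rests on a uniform margin $\sup_t\|\vect{\Phi}_T\|<v_r$ (giving $\eta\ge\eta_{\min}>0$), and once you have that margin, a direct comparison argument on the compact sublevel set of $V$ already gives $V\to 0$ without the derivative bounds that Barbalat requires.
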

\begin{proof}
    Let $V_{p,v}: \mathbb{H}_p \times \mathbb{H}_p \rightarrow \mathbb{R}$ be a candidate Lyapunov function \GR{defined as}
    \begin{align}
    V_{p,v}(\vect{D}_{\vect{\mathcal{C}}},\vect{z}_v) & = k_pV_p + \frac{1}{2}\vect{z}_v\trans\vect{z}_v, \quad k_p > 0,\label{eq:V_p_v_defintion}
\end{align}
with $V_p$ \GR{defined in} \eqref{eq:V_p_defintion}.

\GR{Taking the time derivative of $V_{p,v}$ yields}
\begin{align}
    \Dot{V}_{p,v} & = -k_p\eta G \|\vect{D}_{\vect{\mathcal{C}}}\| + k_p\vect{D}_{\vect{\mathcal{C}}}\trans\vect{z}_v + \vect{z}_v\trans\left( -g \Hat{\boldsymbol{k}} + \Bar{\vect{a}} - \Dot{\vect{\Phi}}\right).
\end{align}

\GR{Substituting the nominal control law} $\Bar{\vect{a}} = \Bar{\vect{a}}_{\vect{\Phi}}$ \GR{from} \eqref{eq:control_law_acceleration_curve_following} \GR{gives}
\begin{align}
    \Dot{V}_{p,v} & = -k_p\eta G \|\vect{D}_{\vect{\mathcal{C}}}\| - k_v\|\vect{z}_v\|^2.
\end{align}
\GR{Since $\dot{V}_{p,v}$ is negative definite with respect to the states $(\boldsymbol{D}_{\mathcal{C}}, \boldsymbol{z}_v)$, this implies that $\boldsymbol{D}_{\mathcal{C}} \to 0$ and $\boldsymbol{z}_v \to 0$ as $t \to \infty$. Consequently, system trajectories asymptotically converge to and follow the target path according to Lemma~\ref{lemma:vector_field_convergence}.}
\end{proof}

This step \GR{yielded} a virtual \GR{acceleration} control law $\Bar{\vect{a}}$. \GR{Within the translational dynamic model \eqref{eq:nominal-model2}, the vehicle's actual acceleration component is given by} $m^{-1}\Bar{u}_T\Hat{\boldsymbol{k}}_{\Bar{B}}$. \GR{At this stage, one could define the thrust control law $T_0$ as}
\begin{align}
    T_0 & = m ( \Bar{\vect{a}}_{\vect{\Phi}} \trans \Hat{\boldsymbol{k}}_{\Bar{B}}).
\end{align}
\GR{While this choice is suitable when addressing only the nominal path-following Problem~\ref{problem:path_following2}, directly setting $\Bar{u}_T = T_0$ is insufficient when incorporating safety guarantees. To simultaneously ensure collision avoidance, our strategy does not control ${u}_T$ directly. Instead, we initialize $\Bar{u}_T(t_0) = T_0(t_0)$ and design a control law for its second time derivative $\Ddot{\Bar{u}}_T$.} 

\subsubsection{Step 3 \GR{- Attitude Kinematic Control Law}}
\GR{In the third step, define the acceleration error as} $\vect{z}_o \equiv \frac{\Bar{u}_T}{m}\Hat{\boldsymbol{k}}_{\Bar{B}} - \Bar{\vect{a}}_{\vect{\Phi}}$. 
%
\GR{Next, consider the angular velocity} $\Bar{\boldsymbol{\omega}}$ and \GR{the thrust-rate} $\Dot{\Bar{u}}_T$ as virtual control \GR{inputs}. \GR{Then,} the augmented system \GR{can be written as}
\begin{align}
    \Dot{\vect{D}}_{\vect{\mathcal{C}}} & = \vect{\Phi} - \Dot{\vect{c}}_* + \vect{z}_v,\\
    \Dot{\vect{z}}_v & =  - k_p\vect{D}_{\vect{\mathcal{C}}} - k_v\vect{z}_v + \vect{z}_o,\\
    \Dot{\vect{z}}_o & =  \Bar{\boldsymbol{o}}\left(\Bar{\boldsymbol{\omega}}\times \Hat{\boldsymbol{k}}\right)\Bar{\boldsymbol{o}}^*\frac{\Bar{u}_T}{m} + \Hat{\boldsymbol{k}}_{\Bar{B}}\frac{\Dot{\Bar{u}}_T}{m} - \Dot{\Bar{\vect{a}}}_{\vect{\Phi}}.
\end{align} 

To stabilize the acceleration error $\boldsymbol{z}_o$, we propose the following virtual control law $\Bar{\boldsymbol{\varphi}}$ for the angular velocity $\Bar{\boldsymbol{\omega}}$:
\begin{align}
    \Bar{\vect{\varphi}} & = \Hat{\boldsymbol{k}} \times \Bar{\boldsymbol{o}}^*\left({\Bar{\vect{\varphi}}_L}\frac{m}{\Bar{u}_T}\right)\Bar{\boldsymbol{o}} + \Hat{\boldsymbol{k}} \omega_{yaw},\label{eq:varphi_control_law}\\
    \Bar{\vect{\varphi}}_L & = \Dot{\Bar{\vect{a}}}_{\vect{\Phi}} - k_o \vect{z}_o - k_{p,v}\vect{z}_v,\quad k_o,k_{p,v}>0,\label{eq:varphi_L}
\end{align}
where $\omega_{yaw} \in \mathbb{R}$ is \GR{an auxiliary} control law \GR{available to independently specify the yaw rate}, as \GR{discussed} later. 

\begin{assumption}
    To ensure \GR{that} the virtual \GR{angular velocity} control law remains well-defined and singularity-free, the total thrust generated by the vehicle \GR{must remain} strictly positive throughout operation\GR{, i.e.,} $u_{T} > 0$ for all $t \ge 0$.
    \label{assumption:positive_uT}
\end{assumption}
\begin{remark}
\GR{This assumption is physically reasonable,} as quadcopters must continuously generate upward thrust to counteract the gravitational acceleration and maintain altitude. Under nominal path-following and collision avoidance scenarios, the vehicle \GR{operates} near hover or \GR{performs} forward flight, \GR{where} the thrust $u_{T}$ inherently \GR{fluctuates} around the vehicle's weight ($mg$). A scenario \GR{in which} $u_{T}=0$ corresponds to \GR{unpowered free-fall, under which} the quadcopter \GR{completely} loses control authority over its translational dynamics.
\end{remark}

The virtual \GR{thrust-rate} control law is \GR{defined as}
\begin{align}
    T_1 & = m(\Bar{\vect{\varphi}}_L\trans\Hat{\boldsymbol{k}}_{\Bar{B}}).\label{eq:thrut_rate_T1}
\end{align}


\begin{lemma}
     \GR{Consider} the \GR{dynamic} system  \eqref{eq:nominal-model1}-\eqref{eq:nominal-model3}\GR{. If} the virtual control laws $\Bar{\vect{\omega}} = \Bar{\vect{\varphi}}$ from \eqref{eq:varphi_control_law} and $\Dot{\Bar{u}}_T = T_1$ from \eqref{eq:thrut_rate_T1} \GR{are applied, then the system trajectories asymptotically} converge to and follow the target curve $\vect{\mathcal{C}}(t)$.
\end{lemma}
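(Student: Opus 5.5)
Following the Backstepping pattern of Lemma~\ref{lemma:nominal_path_following}, the plan is to augment the Lyapunov function with a quadratic term in the new acceleration error and show that its derivative is negative definite. Concretely, we take
\begin{align}
V_{p,v,o}(\vect{D}_{\vect{\mathcal{C}}},\vect{z}_v,\vect{z}_o) = k_{p,v}V_{p,v} + \frac{1}{2}\vect{z}_o\trans\vect{z}_o ,
\end{align}
with $V_{p,v}$ from \eqref{eq:V_p_v_defintion}. The weight $k_{p,v}$ is the gain appearing in \eqref{eq:varphi_L}, and it is chosen so that the cross term produced by $\Dot{\vect{z}}_v$ cancels. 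Reusing the computation of Lemma~\ref{lemma:nominal_path_following}, with the extra $\vect{z}_o$ now present in $\Dot{\vect{z}}_v$, gives
\begin{align}
\Dot{V}_{p,v,o} = -k_{p,v}k_p\eta G\|\vect{D}_{\vect{\mathcal{C}}}\| - k_{p,v}k_v\|\vect{z}_v\|^2 + k_{p,v}\vect{z}_v\trans\vect{z}_o + \vect{z}_o\trans\Dot{\vect{z}}_o .
\end{align}

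The key step is to show that, under $\Bar{\vect{\omega}}=\Bar{\vect{\varphi}}$ and $\Dot{\Bar{u}}_T=T_1$, the actuated part of $\Dot{\vect{z}}_o$ reproduces $\Bar{\vect{\varphi}}_L$ exactly, that is,
\begin{align}
\Bar{\boldsymbol{o}}\left(\Bar{\vect{\varphi}}\times\Hat{\boldsymbol{k}}\right)\Bar{\boldsymbol{o}}^*\frac{\Bar{u}_T}{m} + \Hat{\boldsymbol{k}}_{\Bar{B}}\frac{T_1}{m} = \Bar{\vect{\varphi}}_L .
\end{align}
To prove this, we work in the body frame and set $\vect{w} = \Bar{\boldsymbol{o}}^*\Bar{\vect{\varphi}}_L\Bar{\boldsymbol{o}}\,\frac{m}{\Bar{u}_T}$, a pure quaternion. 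The yaw term $\Hat{\boldsymbol{k}}\omega_{yaw}$ drops out since $\Hat{\boldsymbol{k}}\times\Hat{\boldsymbol{k}}=0$. The triple product identity $(\Hat{\boldsymbol{k}}\times\vect{w})\times\Hat{\boldsymbol{k}} = \vect{w} - (\vect{w}\trans\Hat{\boldsymbol{k}})\Hat{\boldsymbol{k}}$ then shows that the rotational term equals the component of $\Bar{\vect{\varphi}}_L$ orthogonal to $\Hat{\boldsymbol{k}}_{\Bar{B}}$. Rotating back uses that conjugation by a unit quaternion preserves cross and dot products, and that $\Bar{\boldsymbol{o}}\Hat{\boldsymbol{k}}\Bar{\boldsymbol{o}}^* = \Hat{\boldsymbol{k}}_{\Bar{B}}$. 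By \eqref{eq:thrut_rate_T1}, the thrust-rate term $\Hat{\boldsymbol{k}}_{\Bar{B}}T_1/m$ is exactly the missing parallel component $(\Bar{\vect{\varphi}}_L\trans\Hat{\boldsymbol{k}}_{\Bar{B}})\Hat{\boldsymbol{k}}_{\Bar{B}}$. Assumption~\ref{assumption:positive_uT} is what guarantees that the division by $\Bar{u}_T$ is legitimate along trajectories.

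With this identity and \eqref{eq:varphi_L} we obtain $\Dot{\vect{z}}_o = -k_o\vect{z}_o - k_{p,v}\vect{z}_v$. The cross terms $\pm k_{p,v}\vect{z}_v\trans\vect{z}_o$ then cancel, leaving
\begin{align}
\Dot{V}_{p,v,o} = -k_{p,v}k_p\eta G\|\vect{D}_{\vect{\mathcal{C}}}\| - k_{p,v}k_v\|\vect{z}_v\|^2 - k_o\|\vect{z}_o\|^2 .
\end{align}
Exactly as in Lemma~\ref{lemma:nominal_path_following}, this gives $\vect{D}_{\vect{\mathcal{C}}}\to 0$, $\vect{z}_v\to0$ and $\vect{z}_o\to0$. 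Convergence to and following of $\vect{\mathcal{C}}(t)$ then follows from Lemma~\ref{lemma:vector_field_convergence}. If the first term is only negative semidefinite where $\eta G$ vanishes, we would invoke Barbalat's lemma on the bounded, uniformly continuous derivative, or else rely on the positivity of $\eta G$ for $\vect{D}_{\vect{\mathcal{C}}}\neq 0$ as stated earlier.

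The main obstacle is the quaternion identity in the second paragraph. The order of the cross product and the sign conventions for $\times$ in \eqref{eq:varphi_control_law}, as defined in the Notation, must be checked carefully so that the projection comes out with a plus sign. The choice of the weight $k_{p,v}$ in the Lyapunov function also matters, since it is what produces the exact cancellation of the $\vect{z}_v\trans\vect{z}_o$ cross term.
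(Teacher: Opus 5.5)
Your proposal is correct and follows the paper's proof essentially step for step: the same Lyapunov function $V_{p,v,o} = k_{p,v}V_{p,v} + \frac{1}{2}\vect{z}_o\trans\vect{z}_o$, the same triple-product identity showing that the rotational term supplies the component of $\Bar{\vect{\varphi}}_L$ orthogonal to $\Hat{\boldsymbol{k}}_{\Bar{B}}$ (with $T_1$ supplying the parallel component), and the same cancellation of the $k_{p,v}\vect{z}_v\trans\vect{z}_o$ cross term. The differences are cosmetic: you verify the identity in the body frame and close the loop directly on $\Dot{\vect{z}}_o$, whereas the paper works in the inertial frame and keeps the thrust-rate mismatch term $(\vect{z}_o\trans\Hat{\boldsymbol{k}}_{\Bar{B}})\bigl(\Dot{\Bar{u}}_T/m - \Bar{\vect{\varphi}}_L\trans\Hat{\boldsymbol{k}}_{\Bar{B}}\bigr)$ explicit, which it later reuses to define $z_T$ in Step 4.
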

\begin{proof}
Let $V_{p,v,o}: \mathbb{H}_p \times \mathbb{H}_p \times \mathbb{H}_p \rightarrow \mathbb{R}$ be a candidate Lyapunov function \GR{defined as}
    \begin{align}
    V_{p,v,o}(\vect{D}_{\vect{\mathcal{C}}},\vect{z}_v, \vect{z}_o) & = k_{p,v}V_{p,v} + \frac{1}{2}\vect{z}_o\trans\vect{z}_o, \quad k_{p,v} > 0, \label{eq:V_p_v_o_defintion}
\end{align}
\GR{where} $V_{p,v}$ \GR{is given in} \eqref{eq:V_p_v_defintion}.

Its time derivative \GR{along the system trajectories} is
\begin{align}
    \Dot{V}_{p,v,o} & = k_{p,v}\left(-k_p\eta G \|\vect{D}_{\vect{\mathcal{C}}}\| - k_v\|\vect{z}_v\|^2 + \vect{z}_v\trans\vect{z}_o\right) + \vect{z}_o\trans\Dot{\vect{z}}_o.\label{eq:V_p_v_o_derivative}
\end{align}
\GR{Expanding the} term $\vect{z}_o\trans\Dot{\vect{z}}_o$ \GR{yields}
%
\begin{align}
    \vect{z}_o\trans\Dot{\vect{z}}_o & = \vect{z}_o\trans\left[(\Bar{\boldsymbol{o}}\left(\Bar{\boldsymbol{\omega}}\times \Hat{\boldsymbol{k}}\right)\Bar{\boldsymbol{o}}^*)\frac{\Bar{u}_T}{m} +\Hat{\boldsymbol{k}}_{\Bar{B}}\frac{\Dot{\Bar{u}}_T}{m}- \Dot{\Bar{\vect{a}}}_{\vect{\Phi}}\right].
\end{align}

\GR{Applying the angular velocity control law $\Bar{\vect{\varphi}}$ from \eqref{eq:varphi_control_law} for $\Bar{\vect{\omega}}$, we obtain}
\begin{align}
    &\vect{z}_o\trans\left[\left(\Hat{\boldsymbol{k}}_{\Bar{B}} \times \frac{\Bar{\vect{\varphi}}_L}{\frac{\Bar{u}_T}{m}} + \Hat{\boldsymbol{k}}_{\Bar{B}} \omega_{yaw}\right)\times\Hat{\boldsymbol{k}}_{\Bar{B}}\right]\frac{\Bar{u}_T}{m}\nonumber\\
    &+ \boldsymbol{z}_o\trans\Hat{\boldsymbol{k}}_{\Bar{B}}\frac{\Dot{\Bar{u}}_T}{m}- \vect{z}_o\trans\Dot{\Bar{\vect{a}}}_{\vect{\Phi}} .\label{eq:expanding_zo_dotzo_intermediate}
\end{align}
\GR{Since} $\Hat{\boldsymbol{k}}_{\Bar{B}} \omega_{yaw}\times\Hat{\boldsymbol{k}}_{\Bar{B}}$ is null, \GR{the yaw control input} $\omega_{yaw}$ \GR{vanishes, demonstrating that $\omega_{\mathrm{yaw}}$ can be selected independently. Using vector identities,} \eqref{eq:expanding_zo_dotzo_intermediate} \GR{simplifies to}
\begin{align}
        \vect{z}_o\trans\Dot{\vect{z}}_o & = -k_o\|\vect{z}_o\|^2 - k_{p,v}\vect{z}_v\trans\vect{z}_o \nonumber\\
        &+ (\vect{z}_o\trans\Hat{\boldsymbol{k}}_{\Bar{B}})\left(\frac{\Dot{\Bar{u}}_T}{m} - \Bar{\vect{\varphi}}_L\trans\Hat{\boldsymbol{k}}_{\Bar{B}}\right).
\end{align}

Notice that if $\Bar{u}_T$ was controlled directly \GR{as} $\Bar{u}_T = T_0$\GR{, then} $\vect{z}_o \trans \Hat{\boldsymbol{k}}_{\Bar{B}} = 0$ and the \GR{final} term would \GR{vanish}. \GR{In the} dynamically extended \GR{case}, \GR{setting} $\Dot{u}_T=T_1$ \GR{likewise} cancels \GR{this remaining} term. 
\GR{Substituting this result back into \eqref{eq:V_p_v_o_derivative} yields}
\begin{align}
    \Dot{V}_{p,v,o} & = -k_{p,v}k_p\eta G \|\vect{D}_{\vect{\mathcal{C}}}\| - k_{p,v}k_v\|\vect{z}_v\|^2 - k_o\|\vect{z}_o\|^2.
\end{align}
\GR{This confirms that $\dot{V}_{p,v,o}$ is negative definite, implying that $\boldsymbol{D}_{\boldsymbol{\mathcal{C}}} \to 0$, $\vect{z}_v \to 0$, and $\vect{z}_o \to 0$ as $t \to \infty$. Consequently, the system trajectories asymptotically converge to and track the desired path.}
\end{proof}


\subsubsection{Step 4 \GR{- Torque Control Law}}
In the fourth and \GR{final} step, \GR{define the angular velocity tracking error as} $\vect{z}_\omega \equiv \Bar{\vect{\omega}} - \Bar{\vect{\varphi}}$ and \GR{the thrust-rate error as} $z_T = \Dot{u}_T - T_1$. \GR{The complete augmented error dynamics are given by}
\begin{align}
    \Dot{\vect{D}}_{\vect{\mathcal{C}}} & = \vect{\Phi} - \Dot{\vect{c}}_* + \vect{z}_v,\\
    \Dot{\vect{z}}_v & =  - k_p\vect{D}_{\vect{\mathcal{C}}} - k_v\vect{z}_v + \vect{z}_o,\\
    \Dot{\vect{z}}_o & =  \boldsymbol{o}((\vect{\varphi}+\vect{z}_\omega)\times\Hat{\boldsymbol{k}})\boldsymbol{o}^*\frac{\Bar{u}_T}{m}+ \Hat{\boldsymbol{k}}_{\Bar{B}}\frac{T_1 + z_T}{m} - \Dot{\Bar{\vect{a}}}_{\vect{\Phi}},\\
    \Dot{\vect{z}}_\omega & = -J^{-1}(\Bar{\boldsymbol{\omega}} \times J\Bar{\boldsymbol{\omega}}) + J^{-1} \Bar{\boldsymbol{u}}_\tau - \Dot{\Bar{\vect{\varphi}}},\label{eq:dot_z_w}\\
    \Dot{z}_T & = \Ddot{u}_T - \Dot{T}_1.\label{eq:dot_z_T}
\end{align} 


The \GR{torque} control law $\Bar{\vect{u}}_\tau$ is \GR{designed as}
\begin{align}
    \vect{\tau}_\Phi & = \Bar{\boldsymbol{\omega}} \times J\Bar{\boldsymbol{\omega}} + J\Dot{\Bar{\vect{\varphi}}} - k_\omega\vect{z}_\omega - k_{p,v,o}\frac{\Bar{u}_T}{m}(\Hat{\boldsymbol{k}}\times\boldsymbol{o}^*\vect{z}_o\boldsymbol{o}),\label{eq:u_tau_bar}
\end{align}
\GR{where $k_\omega > 0$ is a positive gain matrix (or scalar).}

\GR{Similarly, the virtual control law for the second derivative of thrust, $\ddot{\Bar{u}}_T$, is given by}
\begin{align}
    T_{2,\Phi} & = \Dot{T}_1 - k_Tz_T - (\vect{z}_o\trans\Hat{\boldsymbol{k}}_{\Bar{B}})\frac{k_{p,v,o}}{m}, \quad k_T > 0.\label{eq:ddot_u_T_law}
\end{align}

\begin{theorem}[Path following]
    \GR{Consider the nominal quadrotor system dynamics} \eqref{eq:nominal-model1}-\eqref{eq:nominal-model4}\GR{. If Assumption \ref{assumption:positive_uT} holds and the feedback control laws} $\Bar{\vect{u}}_\tau = \vect{\tau}_\Phi$ from \eqref{eq:u_tau_bar} and $\Ddot{u}_T = T_{2,\Phi}$ from \eqref{eq:ddot_u_T_law} \GR{are applied, then the system} trajectories \GR{asymptotically} converge to and follow the target curve $\vect{\mathcal{C}}(t)$.
    \label{lemma:path_following}
\end{theorem}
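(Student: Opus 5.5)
The proof continues the backstepping chain by adding one more layer to the Lyapunov function from the previous lemma. The candidate is
\begin{align}
V = k_{p,v,o}V_{p,v,o} + \frac{1}{2}\vect{z}_\omega\trans J\vect{z}_\omega + \frac{1}{2}z_T^2, \quad k_{p,v,o}>0,
\end{align}
with $V_{p,v,o}$ from \eqref{eq:V_p_v_o_defintion}. The inertia-weighted norm on $\vect{z}_\omega$ is chosen so that $J$ cancels against $J^{-1}$ in \eqref{eq:dot_z_w}. $V$ is positive definite and radially unbounded in $(\vect{D}_{\vect{\mathcal{C}}},\vect{z}_v,\vect{z}_o,\vect{z}_\omega,z_T)$, since $J$ is positive definite.

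The key computation is $\vect{z}_o\trans\Dot{\vect{z}}_o$, where now $\Bar{\vect{\omega}}=\Bar{\vect{\varphi}}+\vect{z}_\omega$ and $\Dot{\Bar{u}}_T = T_1+z_T$. Repeating the calculation from the previous lemma, the $\Bar{\vect{\varphi}}$ and $T_1$ parts reproduce $-k_o\|\vect{z}_o\|^2-k_{p,v}\vect{z}_v\trans\vect{z}_o$. Two cross terms remain:
\begin{align}
\frac{\Bar{u}_T}{m}\vect{z}_o\trans\left(\boldsymbol{o}(\vect{z}_\omega\times\Hat{\boldsymbol{k}})\boldsymbol{o}^*\right) \quad\text{and}\quad \frac{z_T}{m}(\vect{z}_o\trans\Hat{\boldsymbol{k}}_{\Bar{B}}).
\end{align}
To handle the first, I use that rotation preserves the dot product and apply the scalar triple product identity:
\begin{align}
\vect{z}_o\trans(\boldsymbol{o}(\vect{z}_\omega\times\Hat{\boldsymbol{k}})\boldsymbol{o}^*) = (\boldsymbol{o}^*\vect{z}_o\boldsymbol{o})\trans(\vect{z}_\omega\times\Hat{\boldsymbol{k}}) = \vect{z}_\omega\trans(\Hat{\boldsymbol{k}}\times\boldsymbol{o}^*\vect{z}_o\boldsymbol{o}).
\end{align}
After multiplication by $k_{p,v,o}$, this coupling is exactly what the last term of $\vect{\tau}_\Phi$ in \eqref{eq:u_tau_bar} cancels. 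The term $\frac{z_T}{m}(\vect{z}_o\trans\Hat{\boldsymbol{k}}_{\Bar{B}})$ is likewise cancelled by the last term of $T_{2,\Phi}$ in \eqref{eq:ddot_u_T_law}.

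With these identities in hand, I would write
\begin{align}
\Dot{V} = k_{p,v,o}\Dot{V}_{p,v,o}\big|_{\vect{z}_\omega=0,z_T=0} + k_{p,v,o}(\text{cross terms}) + \vect{z}_\omega\trans(J\Dot{\vect{z}}_\omega) + z_T\Dot{z}_T.
\end{align}
From \eqref{eq:dot_z_w}, $J\Dot{\vect{z}}_\omega = -\Bar{\vect{\omega}}\times J\Bar{\vect{\omega}} + \Bar{\vect{u}}_\tau - J\Dot{\Bar{\vect{\varphi}}}$. Substituting $\Bar{\vect{u}}_\tau=\vect{\tau}_\Phi$ leaves $-k_\omega\vect{z}_\omega - k_{p,v,o}\frac{\Bar{u}_T}{m}(\Hat{\boldsymbol{k}}\times\boldsymbol{o}^*\vect{z}_o\boldsymbol{o})$. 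Similarly, $\Dot{z}_T = -k_Tz_T - (\vect{z}_o\trans\Hat{\boldsymbol{k}}_{\Bar{B}})\frac{k_{p,v,o}}{m}$.

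Collecting terms gives
\begin{align}
\Dot{V} = -k_{p,v,o}k_{p,v}k_p\eta G\|\vect{D}_{\vect{\mathcal{C}}}\| - k_{p,v,o}k_{p,v}k_v\|\vect{z}_v\|^2 - k_{p,v,o}k_o\|\vect{z}_o\|^2 - k_\omega\|\vect{z}_\omega\|^2 - k_T z_T^2.
\end{align}
This is negative definite, because $\eta>0$ under $\|\vect{\Phi}_T\|<v_r$ and $G>0$ for $\vect{D}_{\vect{\mathcal{C}}}\neq0$. By a Lyapunov argument (or LaSalle/Barbalat for the time-varying case), all errors converge to zero. In particular $\vect{D}_{\vect{\mathcal{C}}}\to0$ and $\vect{z}_v\to0$, so $\Bar{\vect{v}}\to\vect{\Phi}$, and Lemma~\ref{lemma:vector_field_convergence} gives convergence to and following of $\vect{\mathcal{C}}(t)$.

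Assumption~\ref{assumption:positive_uT} ensures that $\Bar{\vect{\varphi}}$ and therefore $\Dot{\Bar{\vect{\varphi}}}$ are well defined along trajectories. The main obstacle is sign and frame bookkeeping in the cross term: the rotation conjugation and the orientation of the triple product must match the sign chosen in \eqref{eq:u_tau_bar}. I would verify this first, using $\vect{a}\trans(\vect{b}\times\vect{c}) = \vect{b}\trans(\vect{c}\times\vect{a})$. If the sign came out reversed, the gain term in $\vect{\tau}_\Phi$ would need a flipped sign, so this check determines whether the stated law works as written.
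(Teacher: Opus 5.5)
Your proposal is correct and follows the paper's proof essentially step for step. It uses the same Lyapunov function $V = k_{p,v,o}V_{p,v,o} + \frac{1}{2}\vect{z}_\omega\trans J\vect{z}_\omega + \frac{1}{2}z_T^2$, the same cancellation of the two cross terms by the last terms of $\vect{\tau}_\Phi$ and $T_{2,\Phi}$, and reaches the same negative-definite $\Dot{V}$. Your identity $\vect{z}_o\trans(\boldsymbol{o}(\vect{z}_\omega\times\Hat{\boldsymbol{k}})\boldsymbol{o}^*) = \vect{z}_\omega\trans(\Hat{\boldsymbol{k}}\times\boldsymbol{o}^*\vect{z}_o\boldsymbol{o})$ is right, so the sign in \eqref{eq:u_tau_bar} works as written and your closing hedge about a possible sign flip can be dropped.
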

\begin{proof}
\GR{Consider the candidate Lyapunov function} $V: \mathbb{H}_p \times \mathbb{H}_p \times \mathbb{H}_p \times \mathbb{H}_p \times \mathbb{R} \rightarrow \mathbb{R}$ \GR{defined as}
    \begin{align}
    V(\vect{D}_{\vect{\mathcal{C}}},\vect{z}_v, \vect{z}_o, \vect{z}_\omega, z_T) & = k_{p,v,o}V_{p,v,o}
     + \frac{1}{2}\vect{z}_\omega\trans J\vect{z}_\omega + \frac{1}{2}z_T^2, \label{eq:V_p_v_o_omega_defintion}
\end{align}
with $k_{p,v,o} > 0$ and $V_{p,v,o}$ \GR{being defined in} \eqref{eq:V_p_v_o_defintion}.

\GR{Taking the time derivative of $V$ along the system trajectories yields}
\begin{align}
    &\Dot{V} = -k_{p,v,o}k_{p,v}k_p\eta G \|\vect{D}_{\vect{\mathcal{C}}}\| - k_{p,v,o}k_{p,v}k_v\|\vect{z}_v\|^2 \nonumber\\
    &- k_{p,v,o}k_o\|\vect{z}_o\|^2 + k_{p,v,o}\frac{\Bar{u}_T}{m}\vect{z}_o\trans\boldsymbol{o}\left(\vect{z}_\omega\times\Hat{\boldsymbol{k}}\right)\boldsymbol{o}^* + \vect{z}_\omega\trans J\Dot{\vect{z}}_\omega\nonumber\\
    & + k_{p,v,o}(\vect{z}_o\trans\Hat{\boldsymbol{k}}_{\Bar{B}})\frac{z_T}{m} +z_T\Dot{z}_T .\label{eq:V_p_v_o_w_derivative}
\end{align}

Substituting \eqref{eq:u_tau_bar} into \eqref{eq:dot_z_w} \GR{gives}
\begin{align}
    \Dot{\vect{z}}_\omega & = - k_\omega J^{-1}\vect{z}_\omega - k_{p,v,o}J^{-1}\frac{\Bar{u}_T}{m}(\Hat{\boldsymbol{k}}\times\boldsymbol{o}^*\vect{z}_o\boldsymbol{o}).
\end{align}
\GR{Similarly,} substituting \eqref{eq:ddot_u_T_law} into \eqref{eq:dot_z_T} \GR{results in}
\begin{align}
    \Dot{z}_T & =  - k_Tz_T - (\vect{z}_o\trans\Hat{\boldsymbol{k}}_{\Bar{B}})\frac{k_{p,v,o}}{m}. 
\end{align}

\GR{Substituting these error dynamics back into \eqref{eq:V_p_v_o_w_derivative} leads to cross-term cancellations, simplifying the time derivative to}
\begin{align}
    \Dot{V} =& -k_{p,v,o}k_{p,v}k_p\eta G \|\vect{D}_{\vect{\mathcal{C}}}\| - k_{p,v,o}k_{p,v}k_v\|\vect{z}_v\|^2 \nonumber\\
    &- k_{p,v,o}k_o\|\vect{z}_o\|^2 - k_\omega\|\vect{z}_\omega\|^2 - k_Tz_T^2.
\end{align}

\GR{Since $\dot{V}$ is negative definite with respect to the states $(\boldsymbol{D}_{\mathcal{C}}, \boldsymbol{z}_v, \boldsymbol{z}_o, \boldsymbol{z}_\omega, z_T)$, this guarantees that $\boldsymbol{D}_{\mathcal{C}} \to \boldsymbol{0}$, $\boldsymbol{z}_v \to \boldsymbol{0}$, $\boldsymbol{z}_o \to \boldsymbol{0}$, $\boldsymbol{z}_\omega \to \boldsymbol{0}$, and $z_T \to 0$ as $t \to \infty$. Consequently, the system trajectories asymptotically converge to and follow the target curve $\vect{\mathcal{C}}(t)$.}
\end{proof}

\subsection{Collision avoidance}
\GR{In the next phase of our methodology,} we incorporate obstacle avoidance \GR{capabilities into the nominal quadrotor system} \eqref{eq:nominal-model1}-\eqref{eq:nominal-model4}. 
%
%
\GR{To synthesize} the safety control law, we use the smooth half-squared distance function $D_{\boldsymbol{\mathcal{O}}}^{h}(\boldsymbol{p},t)$, \GR{parameterized by} a smoothing parameter $h > 0$, \GR{as proposed by} \cite{nunes2026safe},
\begin{align}
    D_{\boldsymbol{\mathcal{O}}}^{h}(\boldsymbol{p},t)& \equiv \left(\sum_{i=1}^N
     \left(\frac{\|\boldsymbol{p} - \boldsymbol{o}_i(t)\|^2}{2}\right)^{-\frac{1}{h}}
     \right)^{-h}.
    \label{eq:smooth_distance_definition}
\end{align}
\GR{As demonstrated in} \cite{nunes2026safe}, this smooth function lower-bounds the Euclidean half-squared distance: $D_{\boldsymbol{\mathcal{O}}}^{h}(\boldsymbol{p},t) \le D_{\boldsymbol{\mathcal{O}}}(\boldsymbol{p},t)$.
\begin{assumption}
    As discussed in \cite{nunes2026safe}, \GR{let} $\boldsymbol{p}(t) \in \boldsymbol{\mathcal{D}}(t)$, where $\boldsymbol{\mathcal{D}}(t) = \{ \boldsymbol{p} \in \mathbb{R}^3|\frac{ \partial D_{\boldsymbol{\mathcal{O}}}^{h}(\boldsymbol{p},t)}{\partial \boldsymbol{p}} \neq \boldsymbol{0}\}$ \GR{defines the region} in the workspace \GR{where the gradient of} the smoothed distance vector is \GR{nonzero at time $t$}.
\label{assumption:smmothed_distance_vector_never_null}
\end{assumption}
\begin{remark}
The set of points where the gradient vanishes represents isolated equilibrium states and has zero measure. Consequently, this condition does not restrict the practical operation of the controller \cite{nunes2026safe}.
\end{remark}

Let \GR{$B^{h,\Bar{\lambda}}(\bvect{p},t) = D_{\vect{\mathcal{O}}}^h(\bvect{p},t) - \frac{\Bar{\lambda}^2}{2}$,} 
\GR{which extends} $B^{\lambda}(\vect{p},t)$ in \eqref{eq:safety_set_barrier_function} \GR{by employing the smooth distance function} \eqref{eq:smooth_distance_definition}. \GR{The objective of our collision avoidance strategy is to design control inputs that render the safe set forward invariant, ensuring} $B^{h,\Bar{\lambda}}(\bvect{p},t) \geq 0$, $\forall t \geq 0$.

\GR{To guarantee forward invariance of the safe set and ensure that the barrier condition is satisfied, the control \GR{laws} $\vect{u}_\tau$ and $\Ddot{u}_T$ must be designed accordingly. Because these control inputs first appear in the fourth time derivative of the barrier function, we resort to the HOCBF framework.}

Let \GR{$\psi_0 = B^{h,\Bar{\lambda}}(\bvect{p},t)$ and $\psi_i = \Dot{\psi}_{i-1} + \alpha_i(\psi_{i-1})$,} 
where $\alpha_i$ is a $\mathcal{K}$-class function.
\GR{To ensure safety while preserving nominal performance, we seek control inputs $\vect{u}_\tau$ and $\Ddot{u}_T$ that remain as close as possible to the nominal path-following control laws $\vect{\tau}_\Phi$ and $T_{2,\Phi}$, while satisfying the fourth-order barrier constraint $\psi_4(\vect{p},\vect{v},\vect{o},\vect{\omega}, \vect{u}_\tau, u_T, \Dot{u}_T, \Ddot{u}_T) \ge 0$ \cite{8796030}. This leads to the following point-wise quadratic optimization problem:}
\begin{equation}
    \vect{u}_{\tau,*},\Ddot{u}_{T,*} \in \arg \min_{\vect{\tau},T} \| \vect{\tau}_\Phi - \vect{\tau}\|^2 + \| T_{2,\Phi} - T \|^2, \  \mathrm{s.t.} \ \psi_4 \ge 0.
\end{equation}

\GR{To determine whether the nominal control inputs are safe, let $F$ denote the fourth-order barrier function $\psi_4$ evaluated at the nominal path-following control laws} $\vect{\tau}_\Phi$ and $T_{2,\Phi}$
\begin{align}
F = \psi_4(\vect{p},\vect{v},\vect{o},\vect{\omega}, \vect{\tau}_\Phi, u_T, \Dot{u}_T, T_{2,\Phi}).\label{eq:definition_F}
\end{align}
If $F \ge 0$, \GR{the nominal control inputs already satisfy the safety constraint and are directly applied. Conversely, if} $F<0$, the \GR{nominal} control laws are modified \GR{by} additive \GR{safe corrections $\boldsymbol{\tau}_{\Psi}$ and $T_{2,\Psi}$ designed to enforce the HOCBF is greater than zero}. 
This can be guaranteed \GR{by}
\begin{align}
    \vect{\tau}_{\Psi} & = \frac{-F}{A_T^2 + \|\vect{A}_\tau\|^2}\vect{A}_\tau, \quad T_{2,\Psi} = \frac{-F}{A_T^2 + \|\vect{A}_\tau\|^2}A_T,\label{eq:u_psi}
\end{align}
where {$A_T  = \frac{1}{m}\frac{\partial B^{h,\Bar{\lambda}}}{\partial \vect{p}}\trans\Hat{\boldsymbol{k}}_{\Bar{B}}$ and $\vect{A}_\tau  = \frac{\Bar{u}_T}{m}J^{-1}\left(\Hat{\boldsymbol{k}}\times \boldsymbol{o}^*\frac{\partial B^{h,\Bar{\lambda}}}{\partial \vect{p}}\boldsymbol{o}\right)$.}

\GR{Notice that under Assumptions~\ref{assumption:positive_uT} and \ref{assumption:smmothed_distance_vector_never_null}, the terms $A_T$ and $\vect{A}_\tau$ are well-defined.
However, pure barrier enforcement may lead to a trivial equilibrium where $\vect{v} = 0$, causing the vehicle to stop to avoid collision \cite{10472718}. To overcome this deadlock, a circulation component must be introduced when $F < 0$, following \cite{nunes2026safe, nunes2026saferobusttubebasedpathfollowing}. To design the circulation constraint, we construct an auxiliary HOCBF sequence. Let} \GR{$\phi_1(\vect{p},\vect{v},t) = M\frac{\partial B^{h,\Bar{\lambda}}}{\partial \vect{p}}\trans\vect{v} - b_1$, with $b_1 > 0$, and $\phi_i = \Dot{\phi}_{i-1} + \beta_i(\phi_{i-1})$}, 
where $M$ is a skew-symetric matrix \GR{chosen} according to \cite{nunes2026safe,nunes2026saferobusttubebasedpathfollowing}\GR{, and each} $\beta_i$ is a $\mathcal{K}$-class function. To ensure \GR{active obstacle} circulation \GR{during evasive maneuvers}, we \GR{enforce} $\phi_4(\vect{p},\vect{v},\vect{o},\vect{\omega}, \vect{u}_\tau, u_T, \Dot{u}_T, \Ddot{u}_T)\ge 0$ \GR{whenever} $F<0$. 

Let \GR{$E=\phi_4(\vect{p},\vect{v},\vect{o},\vect{\omega}, \vect{\tau}_\Phi, u_T, \Dot{u}_T, T_{2,\Phi})$} \GR{denote} the circulation HOCBF evaluated at the \GR{nominal} path-following control inputs $\vect{\tau}_\Phi$ and $T_{2,\Phi}$, and let \GR{$E_\Psi=T_{2,\Psi}B_T + \vect{\tau}_\Psi\trans\vect{B}_\tau$} \GR{represent the contribution resulting from the safety correction inputs} $\vect{\tau}_\Psi$ and $T_{2,\Psi}$, 
where $B_T = \frac{1}{m}M\frac{\partial B^{h,\Bar{\lambda}}}{\partial \vect{p}}\trans\Hat{\boldsymbol{k}}_{\Bar{B}}$ and $\vect{B}_\tau = \frac{\Bar{u}_T}{m}J^{-1}\left(\Hat{\boldsymbol{k}}\times \boldsymbol{o}^*M\frac{\partial B^{h,\Bar{\lambda}}}{\partial \vect{p}}\boldsymbol{o}\right)$.

%
%
%

\GR{If $E + E_\Psi \ge 0$, the safety-corrected control inputs simultaneously preserve safety and maintain forward motion. Conversely, if $E + E_\Psi < 0$, the control laws must be updated to jointly satisfy both the barrier constraint ($\psi_4 \ge 0$) and the circulation constraint ($\phi_4 \ge 0$). The joint control corrections $(\boldsymbol{\tau}_\Theta, T_{2,\Theta})$ are obtained as
\begin{align}
  T_{2,\Theta} & = aA_T + bB_T, \quad \vect{\tau}_\Theta = a\vect{A}_\tau + b\vect{B}_\tau,\label{eq:u_theta}
\end{align}
where the Gram matrix elements $g_{ij}$ associated with the control mapping directions are defined by $g_{11}  = A_T^2 + \|\vect{A}_\tau\|^2$, $g_{22}  = B_T^2 + \|\vect{B}_\tau\|^2$, and $g_{12} = A_TB_T + \vect{A}_\tau\trans\vect{B}_\tau$, while the dual multipliers $a$ and $b$ are given explicitly by
\begin{align}
    a & = \frac{-Fg_{22} + Eg_{12}}{g_{11}g_{22} - g_{12}^2}, \quad b = \frac{-Eg_{11} + Fg_{12}}{g_{11}g_{22} - g_{12}^2}.
\end{align}}

\GR{Finally, the complete closed-loop torque and thrust-rate control laws are obtained via piecewise projection}
\begin{align}
    \vect{u}_\tau & = \vect{\tau}_\Phi + 
\begin{cases}
\vect{0} & F \ge 0\\
\vect{\tau}_\Psi & F < 0 \ \text{and} \ E+E_\Psi\ge 0,\\
\vect{\tau}_\Theta & \text{otherwise},
\end{cases}\label{eq:control_law_u_tau}\\
\Ddot{u}_T & = T_{2,\Phi} + 
\begin{cases}
0 & F \ge 0,\\
T_{2,\Psi} & F < 0 \ \text{and} \ E+E_\Psi\ge 0,\\
T_{2,\Theta} & \text{otherwise}.
\end{cases}\label{eq:control_law_ddot_u_T}
\end{align}

\begin{lemma}[Safety]
     \GR{Consider the nominal system dynamics} \eqref{eq:nominal-model1}-\eqref{eq:nominal-model4} in initial states such that $\psi_i(x_0,0) \ge 0$ for $i \in [1,2,3,4]$. If Assumptions \ref{assumption:positive_uT} and \ref{assumption:smmothed_distance_vector_never_null} hold and the control laws $\Bar{\vect{u}}_\tau$ from \eqref{eq:control_law_u_tau} and $\Ddot{u}_T$ from \eqref{eq:control_law_ddot_u_T} are applied, then the safety set $\vect{\mathcal{S}}(t)$ is forward invariant. 
     \label{lemma:safety}
\end{lemma}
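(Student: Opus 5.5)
We would prove the lemma with the standard HOCBF argument of \cite{8796030}, which reduces the claim to showing that the piecewise control laws \eqref{eq:control_law_u_tau}--\eqref{eq:control_law_ddot_u_T} always give $\psi_4 \ge 0$. We then pass from the smooth barrier to the Euclidean one through the lower bound $D_{\vect{\mathcal{O}}}^h \le D_{\vect{\mathcal{O}}}$.

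\emph{Step 1: affine structure of $\psi_4$.} First we write $\psi_4$ explicitly as an affine function of the inputs $(\vect{u}_\tau,\Ddot{u}_T)$. Differentiating $\psi_0=B^{h,\Bar{\lambda}}$ along \eqref{eq:nominal-model1}--\eqref{eq:nominal-model4}, the inputs first appear in the fourth derivative. They enter only through two terms:
\begin{itemize}
\item $\frac{\partial B^{h,\Bar\lambda}}{\partial \vect{p}}\trans\Ddot{\vect{v}}$, whose derivative contains $\frac{1}{m}\frac{\partial B^{h,\Bar\lambda}}{\partial \vect{p}}\trans\Hat{\boldsymbol{k}}_{\Bar{B}}\Ddot{u}_T$;
\item the term $\frac{\Bar u_T}{m}\frac{\partial B^{h,\Bar\lambda}}{\partial \vect{p}}\trans\boldsymbol{o}(\Dot{\vect{\omega}}\times\Hat{\boldsymbol{k}})\boldsymbol{o}^*$, which contains $J^{-1}\vect{u}_\tau$.
\end{itemize}
Using the triple-product identity already exploited in the proof of Theorem~\ref{lemma:path_following}, the second term can be rewritten as $\vect{A}_\tau\trans\vect{u}_\tau$. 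This gives
\[
\psi_4 = F + A_T(\Ddot{u}_T - T_{2,\Phi}) + \vect{A}_\tau\trans(\vect{u}_\tau-\vect{\tau}_\Phi).
\]
The same computation, with $\frac{\partial B^{h,\Bar\lambda}}{\partial \vect{p}}$ replaced by $M\frac{\partial B^{h,\Bar\lambda}}{\partial \vect{p}}$, gives $\phi_4 = E + B_T(\cdot)+\vect{B}_\tau\trans(\cdot)$. Assumptions~\ref{assumption:positive_uT} and \ref{assumption:smmothed_distance_vector_never_null} ensure that $\vect{A}_\tau$ and $A_T$ are well defined. Since $\Hat{\boldsymbol{k}}_{\Bar B}$ and $\Hat{\boldsymbol{k}}\times(\cdot)$ span complementary directions and the gradient is nonzero, they are not simultaneously zero, so $g_{11}>0$.

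\emph{Step 2: case check.} Next we verify $\psi_4\ge0$ in each branch.
\begin{itemize}
\item If $F\ge0$, the nominal inputs are applied and $\psi_4=F\ge0$.
\item If $F<0$ and $E+E_\Psi\ge0$, substituting \eqref{eq:u_psi} gives $\psi_4 = F - F\,g_{11}/g_{11}=0$.
\item Otherwise, substituting \eqref{eq:u_theta} gives $\psi_4 = F + a g_{11} + b g_{12}$ and $\phi_4 = E + a g_{12} + b g_{22}$. With the stated $a,b$, which are Cramer's rule for the $2\times2$ Gram system, both expressions are zero, so $\psi_4=0\ge0$.
\end{itemize}
This last branch requires $g_{11}g_{22}-g_{12}^2\neq0$, i.e., linear independence of $(A_T,\vect{A}_\tau)$ and $(B_T,\vect{B}_\tau)$. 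We would argue this from the skew-symmetry of $M$: $M\nabla B$ is orthogonal to $\nabla B$, so the two input directions are generically independent. If a gap remains, we would state this as a standing condition, following \cite{nunes2026safe}.

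\emph{Step 3: invariance chain and conclusion.} With $\psi_4\ge0$ for all $t$ and $\psi_3(0)\ge0$, the comparison lemma applied to $\Dot\psi_3\ge-\alpha_4(\psi_3)$ gives $\psi_3(t)\ge0$. Inducting downward through $\psi_2,\psi_1$ yields $\psi_0=B^{h,\Bar\lambda}(\vect{p}(t),t)\ge0$ for all $t\ge0$. Finally, $D_{\vect{\mathcal{O}}}\ge D^h_{\vect{\mathcal{O}}}\ge\Bar\lambda^2/2$, so choosing $\Bar\lambda\ge\lambda$ gives $B^\lambda\ge0$, i.e., $\vect{p}(t)\in\vect{\mathcal{S}}(t)$.

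The initial conditions must also include $\psi_0(x_0,0)\ge0$. We read the hypothesis as covering $i=0$, which is implicit in the requirement that the vehicle start safe.

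\emph{Expected obstacle.} The main difficulty is Step 1: the bookkeeping showing that the input coefficients of $\psi_4$ and $\phi_4$ are exactly $A_T,\vect{A}_\tau$ and $B_T,\vect{B}_\tau$, with correct signs and the $J^{-1}$ factor. A secondary concern is the Gram-determinant nonsingularity in the third branch.
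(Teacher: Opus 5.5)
Your proposal is correct and follows the paper's own route. The paper checks the three branches to get $\psi_4\ge 0$: it equals $F\ge0$, or it is driven exactly to zero by \eqref{eq:u_psi} or \eqref{eq:u_theta}. It then descends the HOCBF chain to $\psi_0=B^{h,\Bar{\lambda}}\ge0$ and passes to the Euclidean barrier via $D^h_{\vect{\mathcal{O}}}\le D_{\vect{\mathcal{O}}}$, citing \cite[Lemma~2]{nunes2026safe}. The paper leaves implicit what you spell out: the affine input structure of $\psi_4$ and $\phi_4$, $g_{11}>0$, nonsingularity of the Gram matrix, the comparison-lemma chain, the need for $\psi_0(x_0,0)\ge0$, and $\Bar{\lambda}\ge\lambda$.
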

\begin{proof}
    When \GR{applying} the control laws $\Bar{\vect{u}}_\tau$ from \eqref{eq:control_law_u_tau} and $\Ddot{u}_T$ from \eqref{eq:control_law_ddot_u_T}, \GR{we evaluate the fourth-order barrier function} $\psi_4(\vect{p},\vect{v},\vect{o},\vect{\omega}, \vect{u}_\tau, u_T, \Dot{u}_T, \Ddot{u}_T)$ \GR{across three cases}. For the first case, when $F\ge0$, $\psi_4 \ge 0$ by the definition of $F$ \eqref{eq:definition_F}. When $F < 0$ and $E+E_\Psi\ge 0$\GR{, $\psi_4 = F + T_{2,\Psi}A_T + \vect{\tau}_\Psi\trans\vect{A}_\tau$}. 
    Substituting \eqref{eq:u_psi}, yields $\psi_4 = F - F = 0$.
    \GR{In} the last case, using \eqref{eq:u_theta}, $\psi_4$ takes the form $\psi_4 = F + T_{2,\Theta} A_T + \vect{\tau}_{\Theta} \cdot \vect{A}_\tau = 0$.
    
    \GR{Thus, $\psi_4 \ge 0$ holds for all conditions.} This result means that $\psi_0 = B^{h,\Bar{\lambda}}(\bvect{p},t) \ge 0$. 
    Finally, \GR{applying \cite[Lemma~2]{nunes2026safe}}, $B^{h,\Bar{\lambda}}(\bvect{p},t) \ge 0 \implies B^{\Bar{\lambda}}(\bvect{p},t) \ge 0 $\GR{, which guarantees that} the system \GR{trajectories} remain inside $\vect{\mathcal{S}}(t)$.
\end{proof}

\begin{theorem}
     \GR{Consider the nominal quadrotor dynamics} \eqref{eq:nominal-model1}-\eqref{eq:nominal-model4} under Assumptions \ref{assumption:positive_uT} and \ref{assumption:smmothed_distance_vector_never_null}. If the control laws $\Bar{\vect{u}}_\tau$ from \eqref{eq:control_law_u_tau} and $\Ddot{u}_T$ from \eqref{eq:control_law_ddot_u_T} \GR{are applied, then the safety set $\vect{\mathcal{S}}(t)$ is forward invariant, 
     solving Problem~\ref{problem:collision_avoidance}.} Furthermore, \GR{when unconstrained by the barrier condition (i.e., when $F \ge 0$), the system trajectories asymptotically converge to and follow the target path $\vect{\mathcal{C}}(t)$, solving Problem~\ref{problem:path_following2}.} 
     \label{theorem:main}
\end{theorem}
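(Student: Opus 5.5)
The theorem has two parts, safety and path-following, and I will prove each by reducing it to a result already stated.

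For safety, I will invoke Lemma~\ref{lemma:safety} directly. The only work is to confirm its hypotheses under the control laws \eqref{eq:control_law_u_tau}--\eqref{eq:control_law_ddot_u_T}. First, Assumptions~\ref{assumption:positive_uT} and \ref{assumption:smmothed_distance_vector_never_null} make $A_T$ and $\vect{A}_\tau$ well-defined. Second, because $\partial B^{h,\Bar{\lambda}}/\partial\vect{p}\neq\vect{0}$ and $\Bar u_T>0$, we get $g_{11}=A_T^2+\|\vect{A}_\tau\|^2>0$, so the correction \eqref{eq:u_psi} is defined.

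The joint correction \eqref{eq:u_theta} additionally needs $g_{11}g_{22}-g_{12}^2>0$. By Cauchy--Schwarz this Gram determinant vanishes only when $(A_T,\vect{A}_\tau)$ and $(B_T,\vect{B}_\tau)$ are parallel. I will argue this is excluded by the skew-symmetry of $M$: $M\partial B/\partial\vect p$ is orthogonal to $\partial B/\partial\vect p$, and the maps $\vect{x}\mapsto \vect{x}\trans\Hat{\vect k}_{\Bar B}/m$ and $\vect{x}\mapsto \frac{\Bar u_T}{m}J^{-1}(\Hat{\vect k}\times\vect o^*\vect x\vect o)$ together form an injective linear map of $\vect x$ (thrust direction plus the two body-tilt directions). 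This step is the main obstacle: injectivity needs $\Bar u_T\ne 0$ and $J$ invertible, and I would check it by writing $\vect x$ in body coordinates.

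With the controls well-defined, the case analysis in the proof of Lemma~\ref{lemma:safety} gives $\psi_4\ge0$. The standard HOCBF argument \cite{8796030} then propagates this backward through $\psi_3,\dots,\psi_0$, using the initial conditions $\psi_i\ge0$ and the class-$\mathcal K$ comparison lemma at each level. I note that the theorem statement omits these initial conditions, so I will inherit them implicitly from Lemma~\ref{lemma:safety}. Finally, \cite[Lemma~2]{nunes2026safe} gives $B^{\Bar\lambda}\ge0$, i.e., forward invariance of $\vect{\mathcal S}(t)$, which solves Problem~\ref{problem:collision_avoidance}.

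For path-following, I will consider the regime $F\ge0$, in which the first branch of \eqref{eq:control_law_u_tau}--\eqref{eq:control_law_ddot_u_T} applies, so $\vect u_\tau=\vect\tau_\Phi$ and $\Ddot u_T=T_{2,\Phi}$. The closed loop then coincides with that of Theorem~\ref{lemma:path_following}, and the Lyapunov function \eqref{eq:V_p_v_o_omega_defintion} has the same negative-definite derivative. On any interval where $F\ge0$ persists, I will argue that $V$ is nonincreasing. If $F\ge0$ holds for all $t$ beyond some $t_1$ (the unobstructed case), LaSalle/Barbalat arguments give $\vect D_{\vect{\mathcal C}}\to0$ along with all the backstepping errors. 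This yields convergence to and following of $\vect{\mathcal C}(t)$, solving Problem~\ref{problem:path_following2}.

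I will remark that the ordering of the piecewise law makes the safety constraint take precedence, consistent with the stated priority. Obstructed intervals only reset the initial condition of the Lyapunov analysis, because the errors stay bounded on finite intervals as long as the closed loop has no finite escape.
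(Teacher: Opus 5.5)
Your reduction is the same as the paper's: its proof invokes Lemma~\ref{lemma:safety} for Problem~\ref{problem:collision_avoidance}, and Theorem~\ref{lemma:path_following} for Problem~\ref{problem:path_following2} when $F\ge 0$. Your remarks about the omitted initial conditions $\psi_i\ge 0$ and about needing $F\ge 0$ to persist sharpen that two-line argument legitimately.

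The problem is the extra step you single out as the main obstacle: invertibility of the Gram matrix behind \eqref{eq:u_theta}. Your injectivity claim is right. With $\vect{x}=\partial B^{h,\Bar{\lambda}}/\partial\vect{p}$ and $\vect{x}_b=\vect{o}^*\vect{x}\vect{o}$, the pair $(A_T,\vect{A}_\tau)$ equals $\bigl(\frac{1}{m}\vect{x}_b\trans\Hat{\vect{k}},\,\frac{\Bar{u}_T}{m}J^{-1}(\Hat{\vect{k}}\times\vect{x}_b)\bigr)$. This recovers all three body components of $\vect{x}_b$ when $\Bar{u}_T>0$, so $g_{11}>0$ and $(\vect{\tau}_\Psi,T_{2,\Psi})$ is well defined.

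However, skew-symmetry of $M$ does not make $\vect{x}$ and $M\vect{x}$ linearly independent. Orthogonality only helps if $M\vect{x}\neq\vect{0}$. On $\mathbb{H}_p\cong\mathbb{R}^3$ every skew-symmetric matrix is singular, since $\det M=\det M^{\top}=\det(-M)=-\det M$. So $M\vect{x}=\vect{w}\times\vect{x}$ for some axial vector $\vect{w}$. Suppose the barrier gradient is parallel to $\vect{w}$, e.g. $\vect{w}=\Hat{\vect{k}}$ (horizontal circulation) and the vehicle approaches an obstacle point from directly above. Then $B_T=0$ and $\vect{B}_\tau=\vect{0}$, hence $g_{22}=g_{12}=0$ and $g_{11}g_{22}-g_{12}^2=0$. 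Moreover $E_\Psi=0$, so whenever $F<0$ and $E<0$ the third branch of \eqref{eq:control_law_u_tau}--\eqref{eq:control_law_ddot_u_T} is selected. There $a,b$ are undefined, and $\phi_4$ does not depend on the inputs at all, so no correction can enforce the circulation constraint.

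To close this you need an extra hypothesis: $M\,\partial B^{h,\Bar{\lambda}}/\partial\vect{p}\neq\vect{0}$ along trajectories. This could come from a state-dependent choice of $M$ (the paper defers the choice of $M$ to its references) or be stated as an explicit assumption. With it, your injectivity argument does give $g_{11}g_{22}-g_{12}^2>0$, since an injective linear map preserves linear independence. Alternatively, safety only needs $\psi_4\ge 0$, and $g_{11}>0$ always holds. So falling back to $(\vect{\tau}_\Psi,T_{2,\Psi})$ in the degenerate configuration would preserve forward invariance, but that modifies the stated control law rather than proving the theorem for it. The paper's own proof is silent on this point and tacitly assumes the third branch is well defined. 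Your proposal makes the issue visible, but resolves it with an argument that is false in three dimensions.
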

\begin{proof}
    Problem \ref{problem:collision_avoidance} is solved as a result of Lemma \ref{lemma:safety}. When $F\ge0$, Problem \ref{problem:path_following2} is solved as a result of Theorem \ref{lemma:path_following}.
\end{proof}

\subsection{Vehicle yaw angle \GR{control}}
As \GR{noted in Section~\ref{subsec:backstepping}}, $\omega_{yaw}$ \GR{acts as} a free parameter \GR{in} the control law \eqref{eq:varphi_control_law} \GR{and} does not affect the \GR{convergence guarantees} of Theorem \ref{theorem:main}. \GR{This degree of freedom allows for secondary control objectives, such as controlling the vehicle yaw angle independently.}


\GR{A common requirement is to align the vehicle yaw angle with a specified direction $\boldsymbol{x} \triangleq x_i \hat{\boldsymbol{i}} + x_j \hat{\boldsymbol{j}}$ in the horizontal plane, where $\|\boldsymbol{x}\| > 0$, with time derivative $\dot{\boldsymbol{x}}$. Let $\psi(\bar{\boldsymbol{o}}) \in \mathbb{S}^1$ denote the current yaw angle extracted from the quaternion $\bar{\boldsymbol{o}}$.} \GR{A proportional-plus-feedforward yaw rate control law} $\omega_{yaw}$ is given by
\begin{align}
    z_{yaw} = & -\mathrm{atan2}(\cos(\psi)x_j - \sin(\psi)x_i,\nonumber\\
    &\cos(\psi)x_i + \sin(\psi)x_j)\\
    f_f = & \frac{x_i\Dot{x}_j - x_j\Dot{x}_i}{\|\boldsymbol{x}\|}\\
    \omega_{yaw} = & -k_{yaw} z_{yaw} + f_f, \quad k_{yaw} > 0.
\end{align}

\section{Results}

\GR{To provide a} realistic validation, we implemented the \GR{proposed} controller \GR{directly into} the firmware \GR{of} the Crazyflie nano-quadcopter \GR{platform}\footnote{https://www.bitcraze.io/} and \GR{conducted} SITL simulations using Crazysim \cite{crazysim}, Gazebo, and Crazyswarm 2 \cite{crazyswarm}. \GR{Using the same firmware implementation evaluated in the SITL, the controller code was also compiled and flashed onto a physical Crazyflie 2.1 nano-quadcopter to conduct real-world flight experiments.} 


Detailed visualizations and animated videos of the simulated and experimented trajectories are available at \href{https://youtu.be/kcnU7gSoNiE}{youtu.be/kcnU7gSoNiE}. The implementation code for this project is available at \href{https://github.com/ArthurHDN/crazyflie-control-workspace}{github.com/ArthurHDN/crazyflie-control-workspace}.

\subsection{Crazysim SITL}

First, we evaluated our controller against four baseline controllers available in the Crazyflie firmware: \GR{a PID controller}, \GR{the} Mellinger \GR{controller} \cite{5980409} (MELL), \GR{the} Brescianini \GR{controller} \cite{BrescianiniNonlinearController2013} (BRES), and \GR{the Incremental Nonlinear Dynamic Inversion (INDI) controller} \cite{Smeur2016}. \GR{To} enable \GR{a fair} comparison, all controllers were subjected to position regulation tasks \GR{comprising} a sequence of \GR{step} position references. In our controller, position regulation is \GR{achieved simply} by \GR{setting} the tangent vector $\vect{T}_{\vect{\mathcal{C}}}=0$ \GR{in} \eqref{eq:field_T}\GR{, effectively disabling path progress}. 

\GR{Fig.} \ref{fig:qquad-crazysim-comparison} \GR{illustrates} the position \GR{step responses} for each controller. The INDI controller \GR{became unstable and} crashed \GR{early} during the simulation. The MELL controller \GR{exhibited aggressive behavior and successfully followed initial setpoints, but suffered instability} near $t=45\mathrm{s}$. The PID controller completed \GR{the sequence but exhibited significant} overshoot, \GR{momentarily losing altitude during rapid descents before recovering to the setpoint}. The BRES controller \GR{fulfilled} all \GR{regulation} tasks with little overshoot. \GR{Finally, our} controller \GR{successfully completed} all tasks \GR{with a smooth, well-damped response and a} prescribed convergence speed \GR{parameterized by} $v_r$. 

These results indicate that our controller maintains stable and damped behavior even when performing relatively fast position changes. In contrast to the PID and Mellinger controllers, which exhibited noticeable overshoot or instability during the sequence, ours achieved smooth convergence without significant transient oscillations. This is particularly relevant to the safe path-following application, since aggressive transients during path maneuvers could otherwise produce unnecessarily large tracking errors and increase the likelihood of approaching obstacles.

\GR{Note that} path-following and obstacle avoidance \GR{maneuvers} were not \GR{evaluated in this comparison,} as our \GR{controller} is the only \GR{one among those tested} that explicitly accounts for them. 
\begin{figure}[htb!]
	\centering
	\includegraphics[clip,trim={0in 0in 0in 0in}, width=0.99\columnwidth]{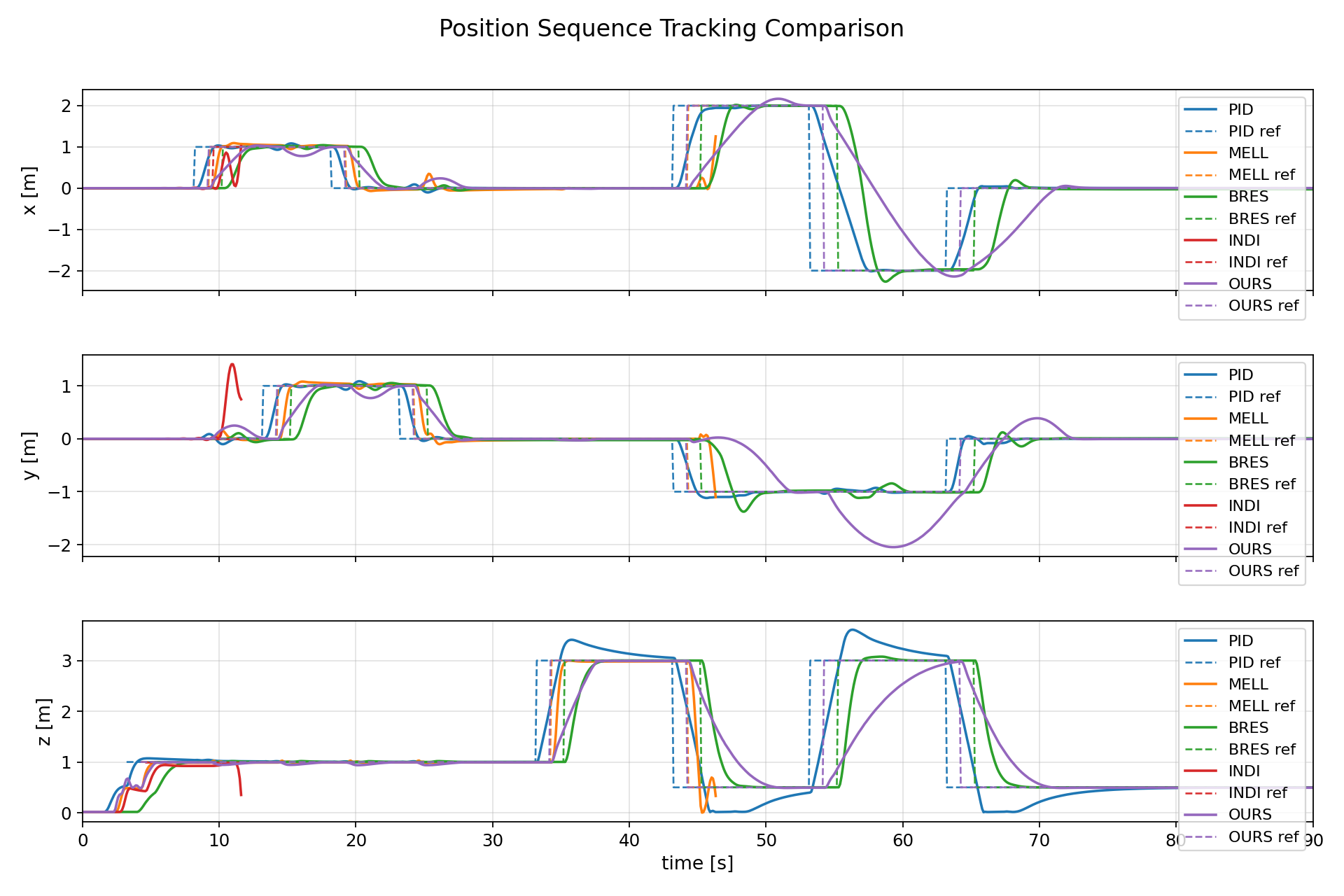}
	\caption{Comparison of our controller (OURS) with PID, Mellinger \cite{5980409} (MELL), Brescianini \cite{BrescianiniNonlinearController2013} (BRES), and INDI \cite{Smeur2016} for sequence of position regulation.}
	\label{fig:qquad-crazysim-comparison}
\end{figure}

In the second SITL simulation scenario, the quadcopter was commanded to follow a 3D parametric curve defined by
\begin{align}
    \vect{c}(s) = \begin{bmatrix}
        (0.25 + 0.1\cos(3s))\cos(2s) \\
        (0.25 + 0.1\cos(3s))\sin(2s) \\
        0.8 + 0.15\sin(3s)
    \end{bmatrix}.
\end{align}
\GR{As shown in Fig.~\ref{fig:qquad-crazysim-pathfollowing}, the proposed controller successfully guides the vehicle to converge to and follow the target trajectory with high accuracy.} This result illustrates the expected behavior of the AVF guidance. The vehicle initially approaches the path from a initial error and progressively reduces it before transitioning to a circulating motion along the path. The smooth decay of the distance to path metric indicates that the translational guidance and whole-body Backstepping controller remain consistent during this transition, without introducing significant oscillations around the desired path.
\begin{figure}[htb!]
	\centering
	\includegraphics[clip,trim={0in 0in 0in 0in}, width=0.55\columnwidth]{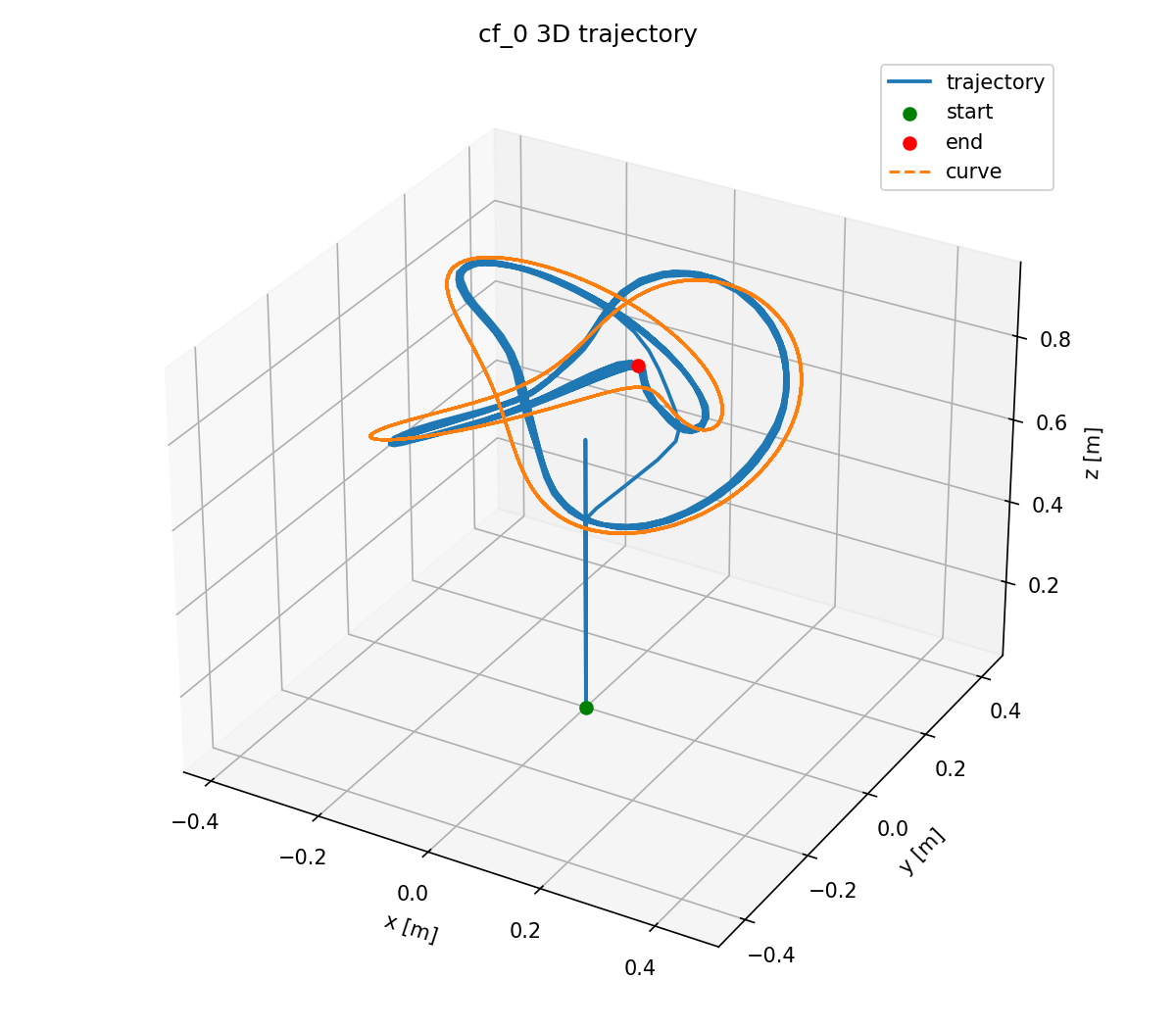}
    \includegraphics[clip,trim={0in 3.45in 0in 0in}, width=0.9\columnwidth]{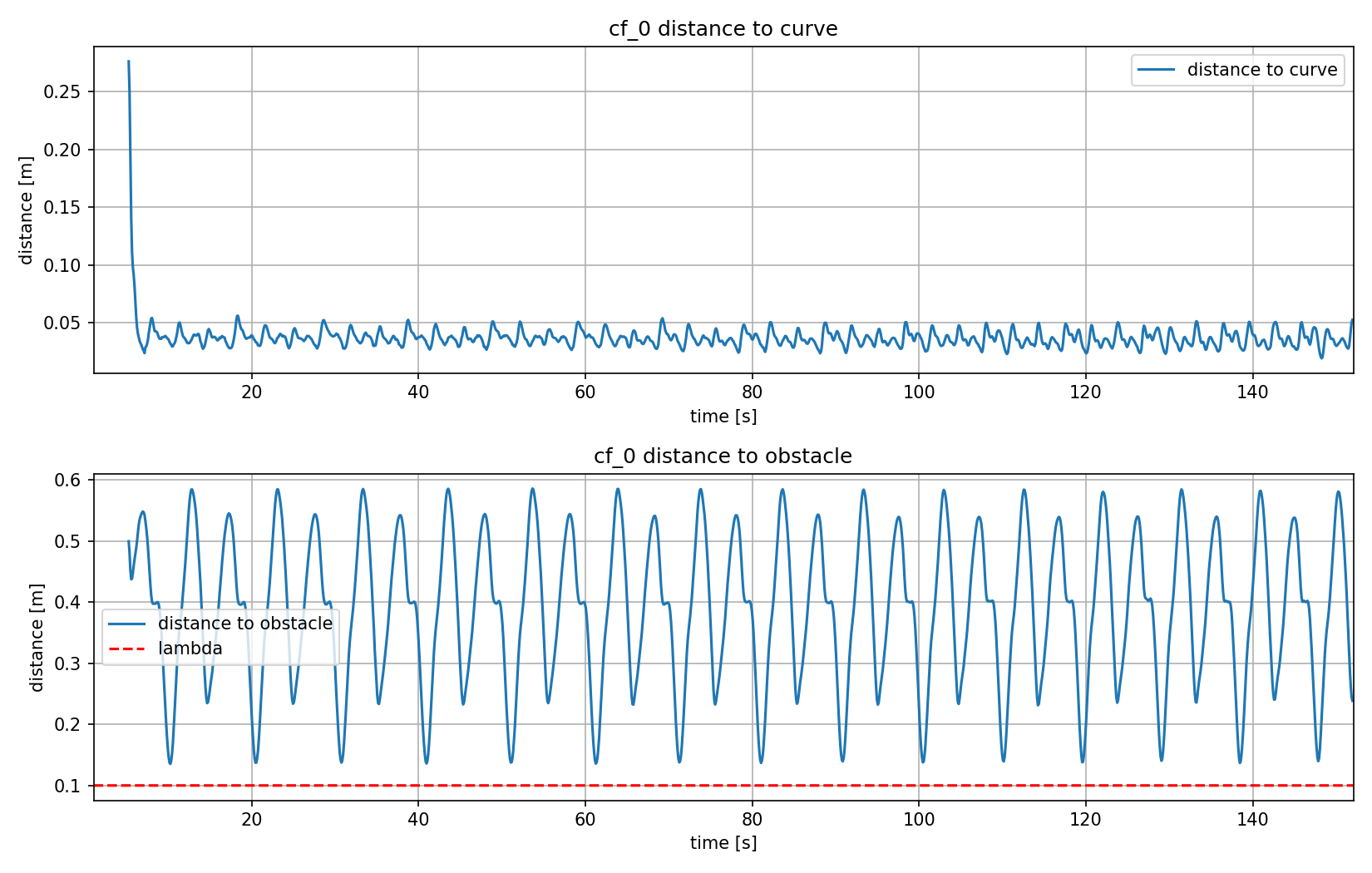}
	\caption{System trajectory and distance to the path of the Crazysim path-following task.}
	\label{fig:qquad-crazysim-pathfollowing}
\end{figure}

\GR{Finally, we evaluated the combined path-following and safe control framework in an obstacle avoidance scenario. The quadrotor was commanded to follow a 3D lemniscate curve}
\begin{align}
    \vect{c}(s) = \begin{bmatrix}
        0.35\sin(s) &
        0.25\sin(2s) &
        0.8 + 0.1\cos(s)
    \end{bmatrix}^{\top},
    \label{eq:path_3d_lemminiscata}
\end{align}
\GR{while avoiding a virtual obstacle represented by a point cloud. This demonstrate the interaction between the nominal path-following and the safety layer. As shown in Fig.~\ref{fig:qquad-crazysim-obstacle}, the proposed controller successfully maintains safety by steering the vehicle away from the obstacle when necessary and seamlessly returning it to the target path once the obstacle is cleared.}
\begin{figure}[htb!]
	\centering
	\includegraphics[clip,trim={0in 0in 0in 0in}, width=0.55\columnwidth]{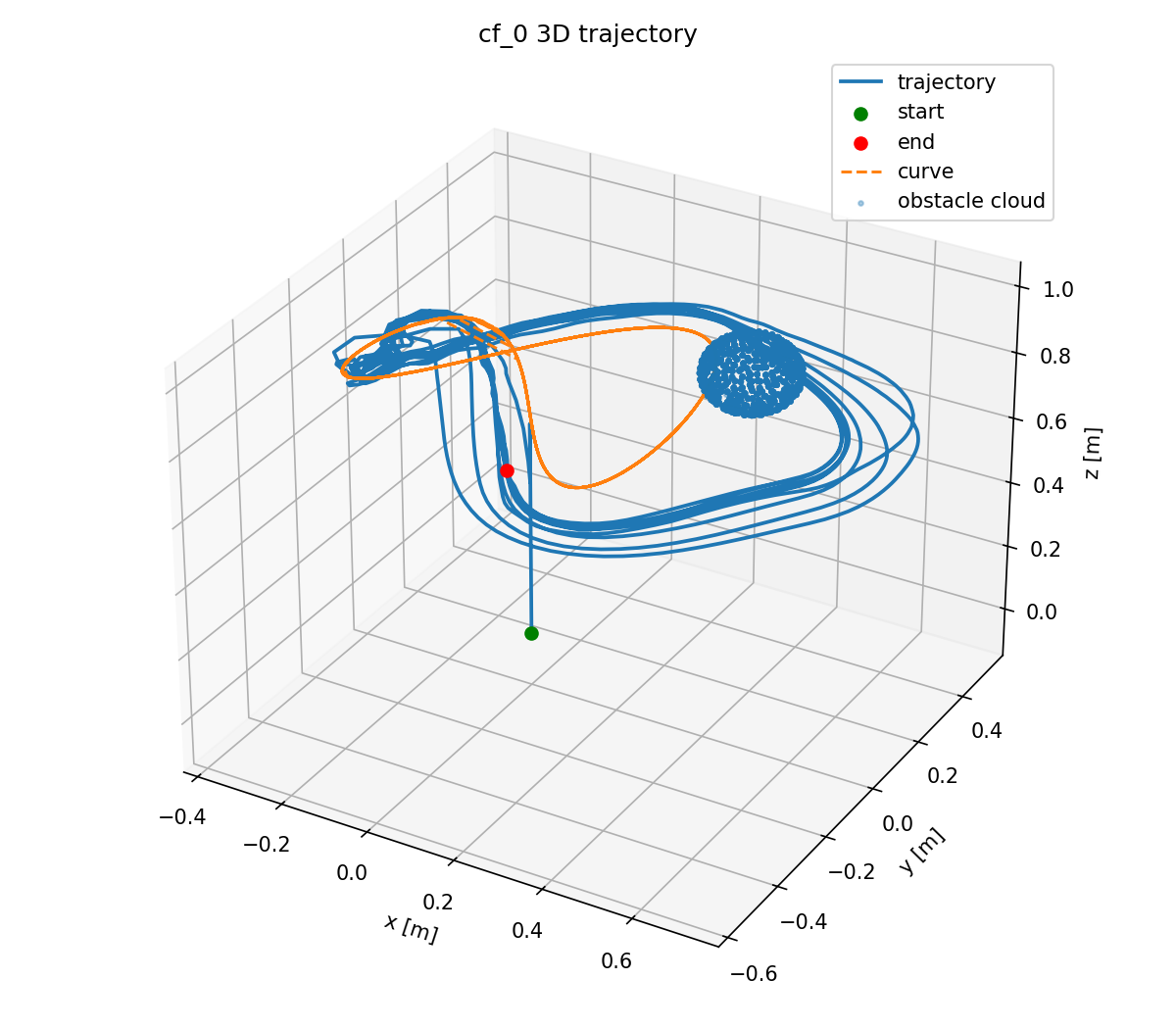}
    \includegraphics[clip,trim={0in 0in 0in 0in}, width=0.9\columnwidth]{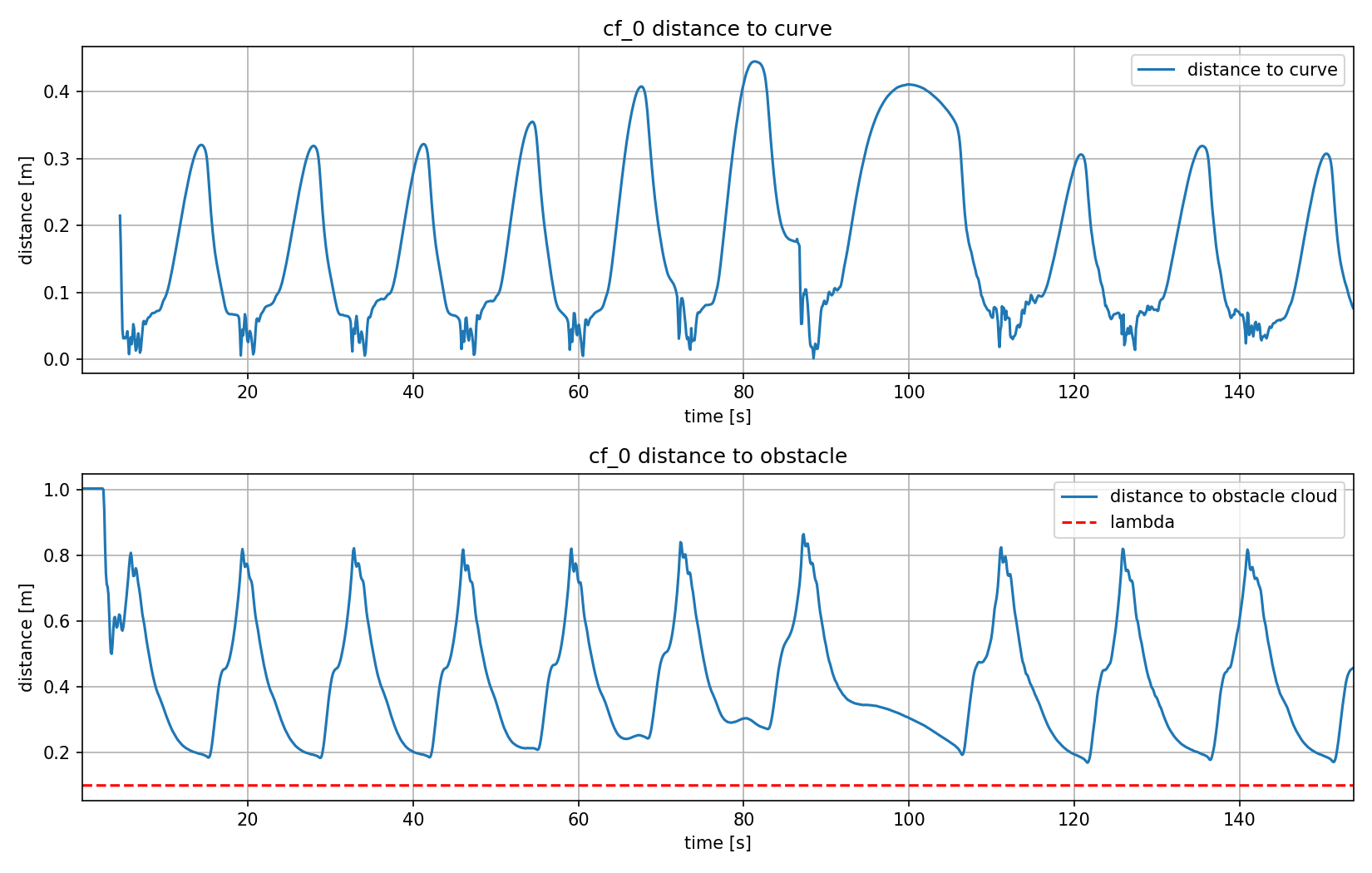}
	\caption{System trajectory and distance to the path of the Crazysim path-following with obstacle avoidance task.}
	\label{fig:qquad-crazysim-obstacle}
\end{figure}

\subsection{\GR{Experimental Results with the Crazyflie Nano-Quadcopter}}

\GR{The vehicle was commanded to follow the 3D lemniscate path defined in \eqref{eq:path_3d_lemminiscata} under two distinct operational conditions: unconstrained nominal path following, and path following in the presence of an obstacle. Figs.~\ref{fig:qquad-crazyflie-pathfollowing} and \ref{fig:qquad-crazyflie-obstacle} present the experimental 3D flight trajectories alongside the corresponding distance metrics for the obstacle-free and obstacle-avoidance scenarios, respectively.} Despite the considerably more challenging real-world conditions, including measurement noise, disturbances, and modeling discrepancies, the vehicle remains capable of following the prescribed path. The similarity between the simulated and experimental trajectories suggests that the controller retains its qualitative behavior when implemented on the physical platform.
\begin{figure}[htb!]
	\centering
	\includegraphics[clip,trim={0in 0in 0in 0in}, width=0.55\columnwidth]{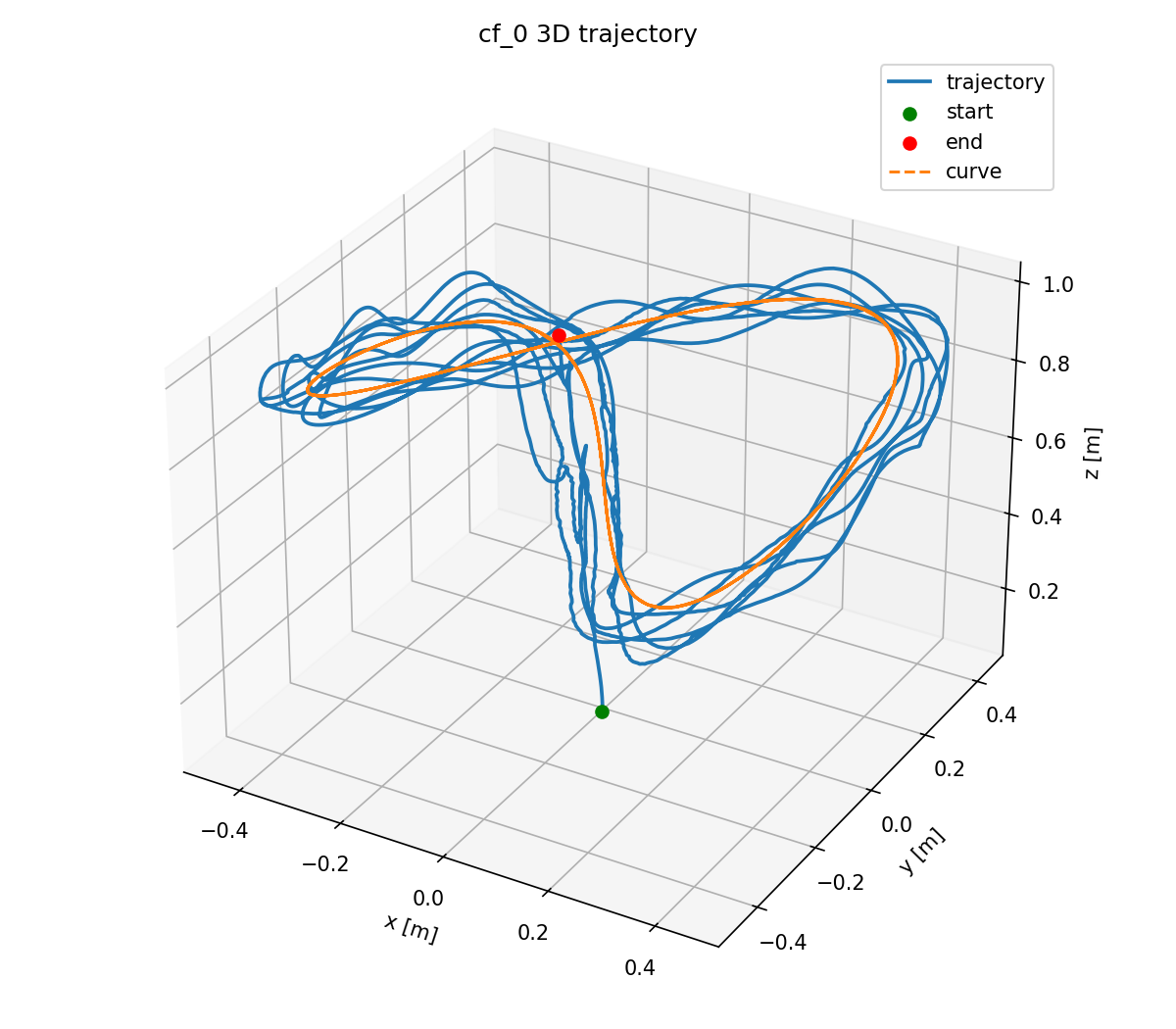}
    \includegraphics[clip,trim={0in 3.7in 0in 0in}, width=0.9\columnwidth]{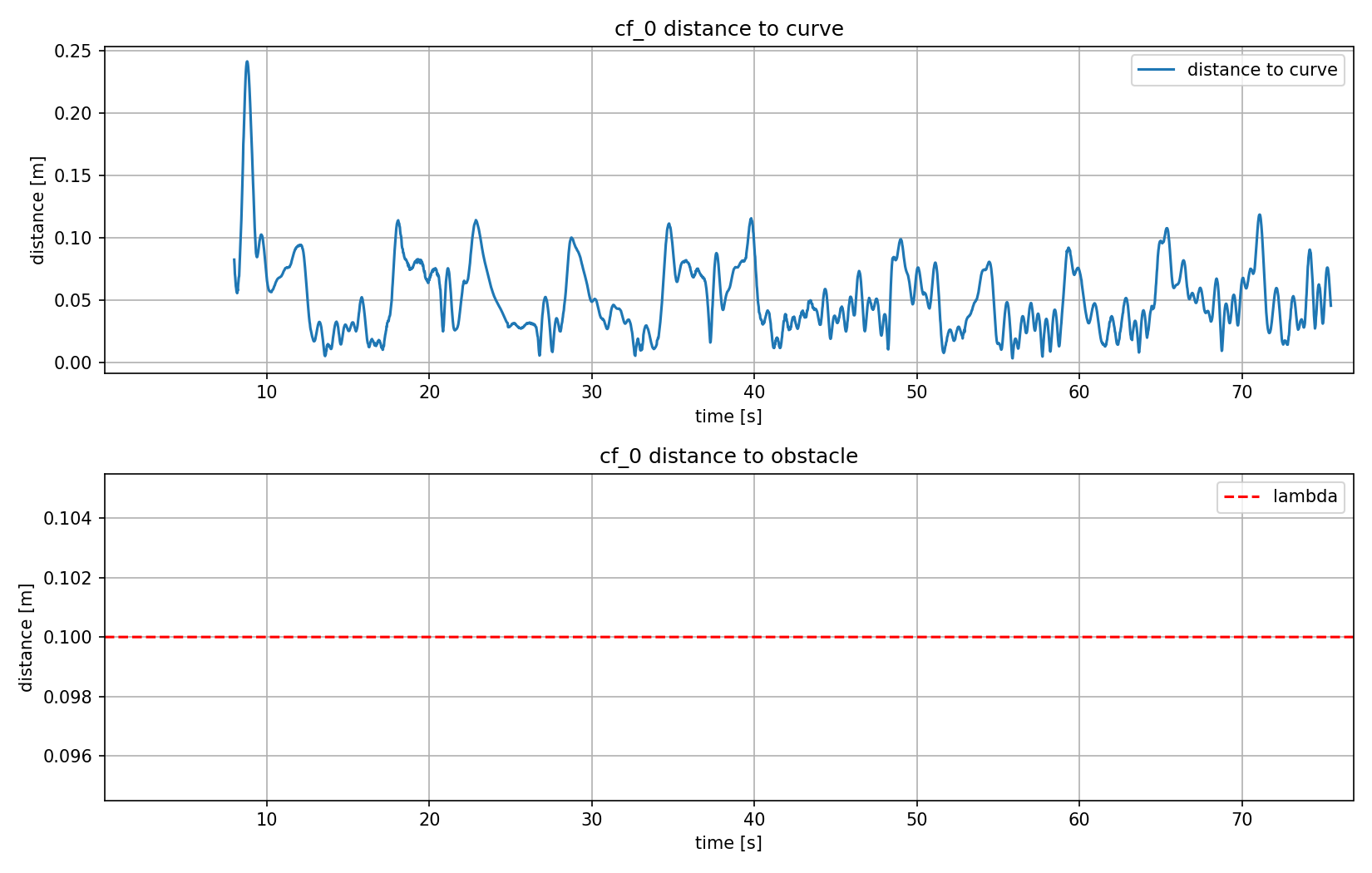}
	\caption{System trajectory and distance to the path of the Crazyflie path-following task experiment.}
	\label{fig:qquad-crazyflie-pathfollowing}
\end{figure}
\begin{figure}[htb!]
	\centering
	\includegraphics[clip,trim={0in 0in 0in 0in}, width=0.55\columnwidth]{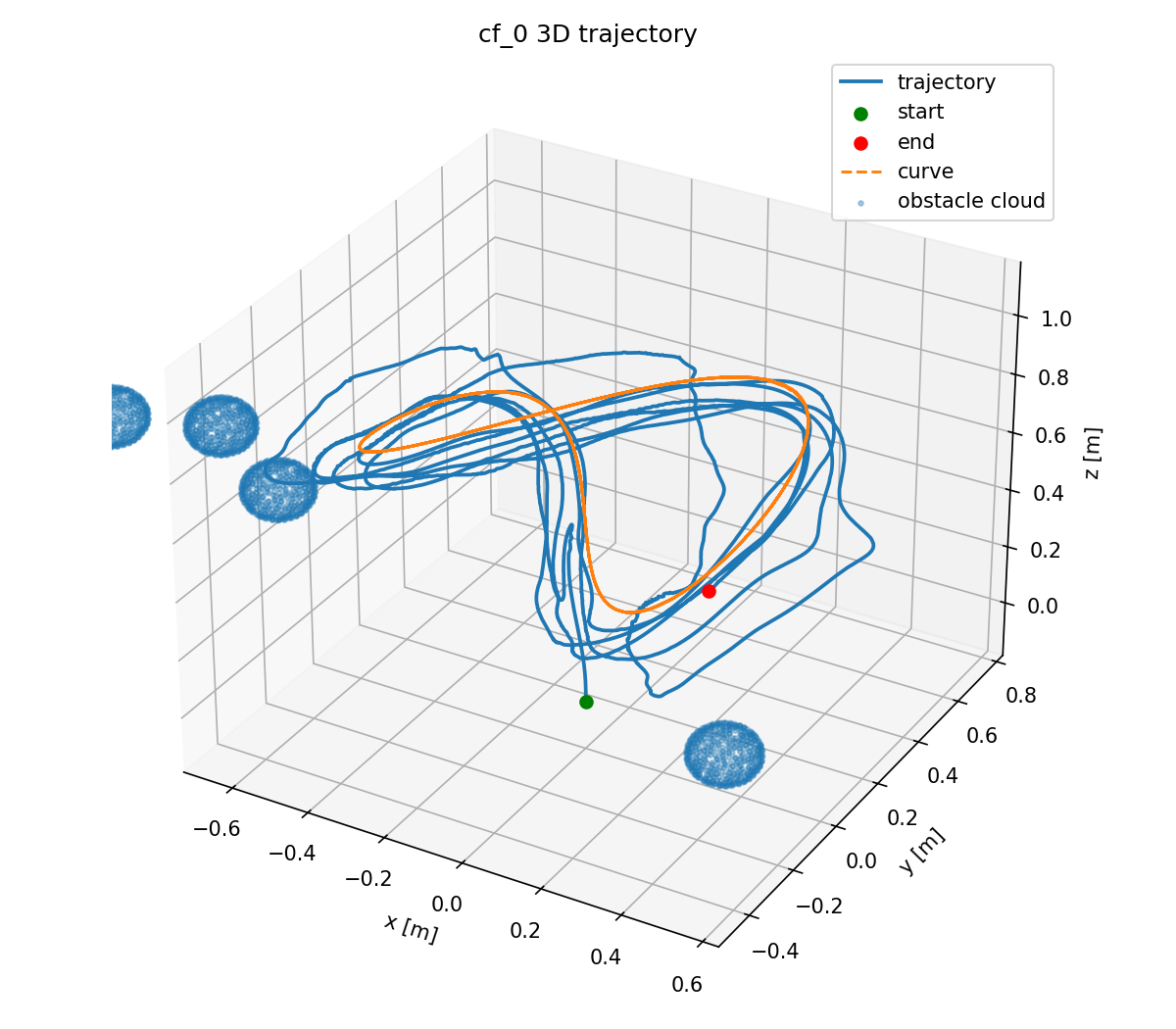}
    \includegraphics[clip,trim={0in 0in 0in 0in}, width=0.9\columnwidth]{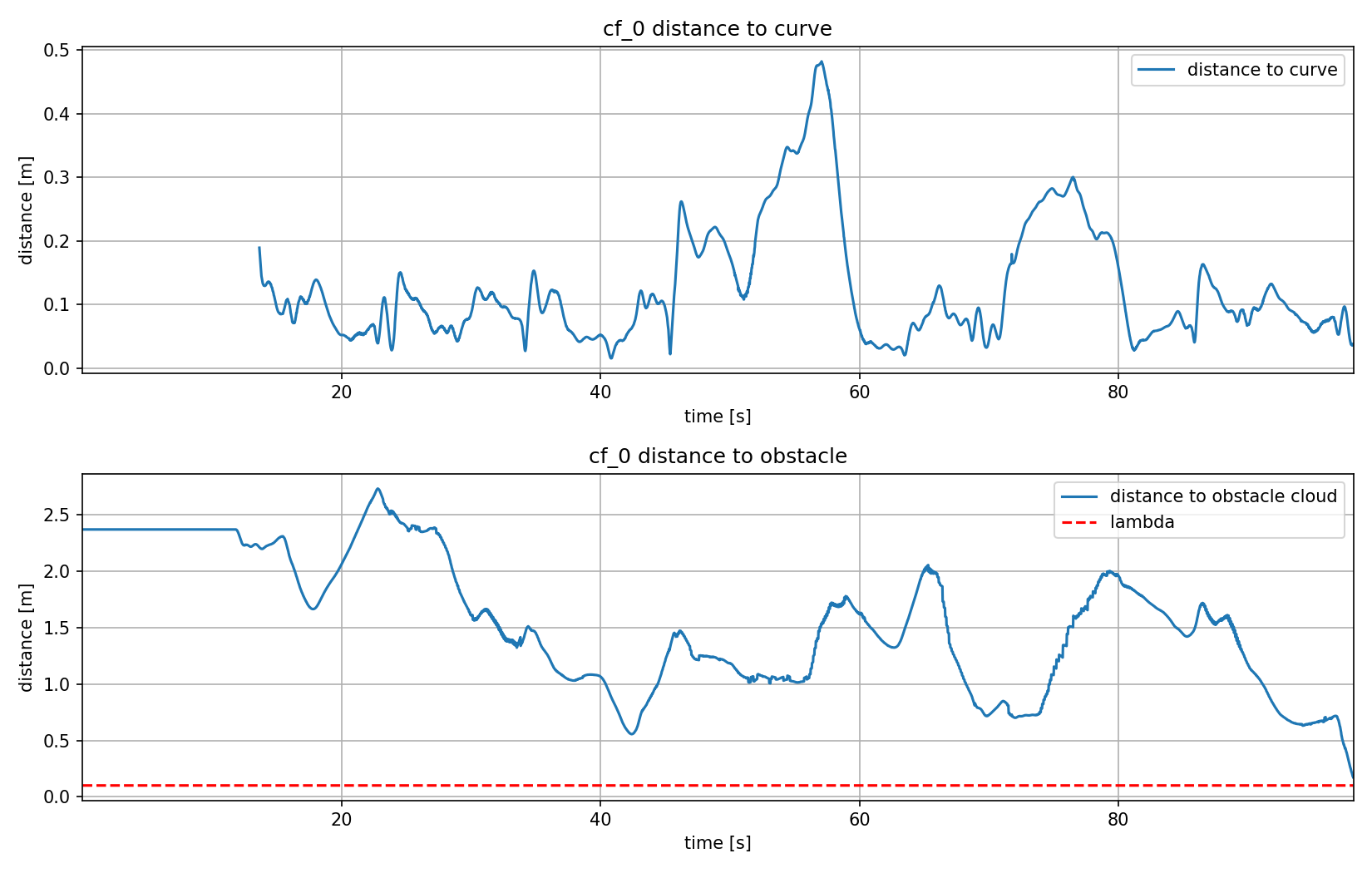}
	\caption{System trajectory and distance to the path of the Crazyflie path-following with obstacle avoidance task.}
	\label{fig:qquad-crazyflie-obstacle}
\end{figure}

\clearpage
\section{Conclusion}

\GR{This paper presented an integrated safety-critical control framework for quadcopter path following in the presence of obstacles. By unifying a vector-field navigation law with a whole-body backstepping controller and High-Order Control Barrier Functions (HOCBFs), the proposed scheme ensures high-accuracy tracking of time-varying 3D target curves while guaranteeing safety with respect to the full relative-degree-four nonlinear quadcopter dynamics. To overcome potential stagnation at singular barrier equilibria, an explicit closed-form dual-HOCBF circulation mechanism was incorporated to seamlessly detour around obstacles without bringing the vehicle to a stop. Rigorous theoretical analyses established the asymptotic convergence to the target path when unconstrained and the forward invariance of the safe set under all operating conditions. Finally, the controller was implemented directly in the Crazyflie firmware and validated through both SITL simulations and real-world flight experiments, demonstrating robust obstacle avoidance and damped tracking performance.}



\bibliographystyle{ieeetr} 
\bibliography{bibliography}

@article{raffo2010integral,
  title={An integral predictive/nonlinear H$_\infty$ control structure for a quadrotor helicopter},
  author={Raffo, Guilherme V and Ortega, Manuel G and Rubio, Francisco R},
  journal={Automatica},
  volume={46},
  number={1},
  pages={29--39},
  year={2010},
  publisher={Elsevier}
}

@article{khalil2002nonlinear,
  title={Nonlinear systems},
  author={Khalil, Hassan K},
  journal={Patience Hall},
  volume={115},
  year={2002}
}

@article{rezende2021constructive,
  author={Rezende, Adriano M. C. and Goncalves, Vinicius M. and Pimenta, Luciano C. A.},
  journal={IEEE Transactions on Robotics}, 
  title={Constructive Time-Varying Vector Fields for Robot Navigation}, 
  year={2021},
  volume={38},
  number={2},
  pages={852-867},
  doi={10.1109/TRO.2021.3093674}}

@inproceedings{rezende2020robust,
  title={Robust quadcopter control with artificial vector fields},
  author={Rezende, Adriano MC and Gon{\c{c}}alves, Vinicius M and Nunes, Arthur HD and Pimenta, Luciano CA},
  booktitle={2020 IEEE International Conference on Robotics and Automation (ICRA)},
  pages={6381--6387},
  year={2020},
  organization={IEEE}
}

@unpublished{adorno2017robot,
  TITLE = {{Robot Kinematic Modeling and Control Based on Dual Quaternion Algebra --- Part I: Fundamentals.}},
  AUTHOR = {Vilhena Adorno, Bruno},
  URL = {https://hal.science/hal-01478225},
  YEAR = {2017},
  MONTH = Feb,
  HAL_ID = {hal-01478225},
  HAL_VERSION = {v1},
}

@inproceedings{pereira2021collision,
  title={Collision-free vector field guidance and {MPC} for a fixed-wing {UAV}},
  author={Pereira, Leonardo AA and Nunes, Arthur HD and Rezende, Adriano MC and Gon{\c{c}}alves, Vinicius M and Raffo, Guilherme V and Pimenta, Luciano CA},
  booktitle={2021 IEEE International Conference on Robotics and Automation (ICRA)},
  pages={176--182},
  year={2021},
  organization={IEEE}
}

@article{rubi2020survey,
  title={A survey of path following control strategies for {UAV}s focused on quadrotors},
  author={Rub{\'\i}, Bartomeu and P{\'e}rez, Ramon and Morcego, Bernardo},
  journal={Journal of Intelligent \& Robotic Systems},
  volume={98},
  number={2},
  pages={241--265},
  year={2020},
  publisher={Springer}
}

@inproceedings{nunes2022vector,
  author={Nunes, Arthur H. D. and Rezende, Adriano M. C. and Cruz, Gilmar P. and Freitas, Gustavo M. and Gonçalves, Vinicius M. and Pimenta, Luciano C. A.},
  booktitle={2022 IEEE 61st Conference on Decision and Control (CDC)}, 
  title={Vector field for curve tracking with obstacle avoidance}, 
  year={2022},
  volume={},
  number={},
  pages={2031-2038},
  doi={10.1109/CDC51059.2022.9992435}}

@mastersthesis{salierno2018whole,
  author  = "Giovanni Felice Salierno",
  title   = "Whole-body Backstepping Control of a Quadrotor {UAV}  for Trajectory Tracking",
  school  = "Universidade Federal de Minas Gerais",
  year    = "2018"
}

@inproceedings{8714418,
  author={Eid, Saif Eldin and Sham Dol, Sharul},
  booktitle={2019 Advances in Science and Engineering Technology International Conferences (ASET)}, 
  title={Design and Development of Lightweight-High Endurance Unmanned Aerial Vehicle for Offshore Search and Rescue Operation}, 
  year={2019},
  volume={},
  number={},
  pages={1-5},
  doi={10.1109/ICASET.2019.8714418}}

@article{miranda2022autonomous,
  title={Autonomous navigation system for a delivery drone},
  author={Miranda, Victor RF and Rezende, Adriano and Rocha, Thiago L and Azp{\'u}rua, H{\'e}ctor and Pimenta, Luciano CA and Freitas, Gustavo M},
  journal={Journal of Control, Automation and Electrical Systems},
  volume={33},
  number={1},
  pages={141--155},
  year={2022},
  publisher={Springer}
}

@article{silva2022dynamics,
  title={Dynamics of mobile manipulators using dual quaternion algebra},
  author={Afonso Silva, Frederico Fernandes and Jos{\'e} Quiroz-Oma{\~n}a, Juan and Vilhena Adorno, Bruno},
  journal={Journal of Mechanisms and Robotics},
  volume={14},
  number={6},
  pages={061005},
  year={2022},
  publisher={American Society of Mechanical Engineers}
}

@article{yao2021singularity,
  year = {2021},
  month = aug,
  publisher = {Institute of Electrical and Electronics Engineers ({IEEE})},
  volume = {37},
  number = {4},
  pages = {1206--1221},
  author = {Weijia Yao and Hector Garcia de Marina and Bohuan Lin and Ming Cao},
  title = {{Singularity-Free Guiding Vector Field for Robot Navigation}},
  journal = {{IEEE} Transactions on Robotics}
}

@inproceedings{8384677,
  author={Williams, Alexander and Yakimenko, Oleg},
  booktitle={2018 4th International Conference on Control, Automation and Robotics (ICCAR)}, 
  title={Persistent mobile aerial surveillance platform using intelligent battery health management and drone swapping}, 
  year={2018},
  volume={},
  number={},
  pages={237-246},
  doi={10.1109/ICCAR.2018.8384677}}

@inproceedings{7487716,
  author={Duggal, Vishakh and Sukhwani, Mohak and Bipin, Kumar and Reddy, G. Syamasundar and Krishna, K. Madhava},
  booktitle={2016 IEEE International Conference on Robotics and Automation (ICRA)}, 
  title={Plantation monitoring and yield estimation using autonomous quadcopter for precision agriculture}, 
  year={2016},
  volume={},
  number={},
  pages={5121-5127},
  doi={10.1109/ICRA.2016.7487716}}

@inproceedings{5531424,
  author={Huang, Mu and Xian, Bin and Diao, Chen and Yang, Kaiyan and Feng, Yu},
  booktitle={Proceedings of the 2010 American Control Conference}, 
  title={Adaptive tracking control of underactuated quadrotor unmanned aerial vehicles via backstepping}, 
  year={2010},
  volume={},
  number={},
  pages={2076-2081},
  doi={10.1109/ACC.2010.5531424}}

@article{5428769,
  author={Raptis, Ioannis A. and Valavanis, Kimon P. and Moreno, Wilfrido A.},
  journal={IEEE Transactions on Control Systems Technology}, 
  title={A Novel Nonlinear Backstepping Controller Design for Helicopters Using the Rotation Matrix}, 
  year={2011},
  volume={19},
  number={2},
  pages={465-473},
  doi={10.1109/TCST.2010.2042450}}

@article{TAN201647,
title = {Tracking of a moving ground target by a quadrotor using a backstepping approach based on a full state cascaded dynamics},
journal = {Applied Soft Computing},
volume = {47},
pages = {47-62},
year = {2016},
issn = {1568-4946},
author = {Chun Kiat Tan and Jianliang Wang and Yew Chai Paw and Teng Yong Ng}
}

@article{abaunza2017dual,
  title={Dual quaternion modeling and control of a quad-rotor aerial manipulator},
  author={Abaunza, Hern{\'a}n and Castillo, Pedro and Victorino, A and Lozano, Rogelio},
  journal={Journal of Intelligent \& Robotic Systems},
  volume={88},
  number={2},
  pages={267--283},
  year={2017},
  publisher={Springer}
}

@inproceedings{5980409,
author={D. Mellinger and V. Kumar},
booktitle={2011 IEEE International Conference on Robotics and Automation},
title={Minimum snap trajectory generation and control for quadrotors},
year={2011},
volume={},
number={},
pages={2520-2525},
doi={10.1109/ICRA.2011.5980409},
ISSN={1050-4729},
month={May},}

@article{yao2022guiding,
  author={Yao, Weijia and Lin, Bohuan and Anderson, Brian D. O. and Cao, Ming},
  journal={IEEE Transactions on Automatic Control}, 
  title={Guiding Vector Fields for Following Occluded Paths}, 
  year={2022},
  volume={67},
  number={8},
  pages={4091-4106},
  doi={10.1109/TAC.2022.3179215}}

@article{ozbek2016feedback,
author = {Necdet Sinan Özbek and Mert Önkol and Mehmet Önder Efe},
title ={Feedback control strategies for quadrotor-type aerial robots: a survey},
journal = {Transactions of the Institute of Measurement and Control},
volume = {38},
number = {5},
pages = {529-554},
year = {2016},
doi = {10.1177/0142331215608427},

URL = { 
        https://doi.org/10.1177/0142331215608427
    
},
eprint = { 
        https://doi.org/10.1177/0142331215608427
    
}
}

@inproceedings{bouabdallah2004pid,
  author={Bouabdallah, S. and Noth, A. and Siegwart, R.},
  booktitle={2004 IEEE/RSJ International Conference on Intelligent Robots and Systems (IROS) (IEEE Cat. No.04CH37566)}, 
  title={{PID} vs {LQ} control techniques applied to an indoor micro quadrotor}, 
  year={2004},
  volume={3},
  number={},
  pages={2451-2456 vol.3},
  doi={10.1109/IROS.2004.1389776}}

@inproceedings{cowling2007prototype,
  author={Cowling, Ian D. and Yakimenko, Oleg A. and Whidborne, James F. and Cooke, Alastair K.},
  booktitle={2007 European Control Conference (ECC)}, 
  title={A prototype of an autonomous controller for a quadrotor {UAV}}, 
  year={2007},
  volume={},
  number={},
  pages={4001-4008},
  doi={10.23919/ECC.2007.7068316}}

@inproceedings{nunes2023integrated,
  author={Nunes, Arthur H. D. and Raffo, Guilherme V. and Pimenta, Luciano C. A.},
  booktitle={2023 IEEE International Conference on Robotics and Automation (ICRA)}, 
  title={Integrated vector field and backstepping control for quadcopters}, 
  year={2023},
  volume={},
  number={},
  pages={1256-1262},
  doi={10.1109/ICRA48891.2023.10160824}}

@INPROCEEDINGS{8796030,
  author={Ames, Aaron D. and Coogan, Samuel and Egerstedt, Magnus and Notomista, Gennaro and Sreenath, Koushil and Tabuada, Paulo},
  booktitle={2019 18th European Control Conference (ECC)}, 
  title={Control Barrier Functions: Theory and Applications}, 
  year={2019},
  volume={},
  number={},
  pages={3420-3431},
  doi={10.23919/ECC.2019.8796030}}

@ARTICLE{10530410,
  author={Gonçalves, Vinicius Mariano and Tzes, Anthony and Khorrami, Farshad and Fraisse, Philippe},
  journal={IEEE Transactions on Robotics}, 
  title={Smooth Distances for Second-Order Kinematic Robot Control}, 
  year={2024},
  volume={40},
  number={},
  pages={2950-2966},
  doi={10.1109/TRO.2024.3400924}}

@ARTICLE{10472718,
  author={Gonçalves, Vinicius Mariano and Krishnamurthy, Prashanth and Tzes, Anthony and Khorrami, Farshad},
  journal={IEEE Transactions on Control Systems Technology}, 
  title={Control Barrier Functions With Circulation Inequalities}, 
  year={2024},
  volume={32},
  number={4},
  pages={1426-1441},
  doi={10.1109/TCST.2024.3372802}}

@inproceedings{crazyswarm,
  author    = {James A. Preiss* and
               Wolfgang  H\"onig* and
               Gaurav S. Sukhatme and
               Nora Ayanian},
  title     = {Crazyswarm: {A} large nano-quadcopter swarm},
  booktitle = {{IEEE} International Conference on Robotics and Automation ({ICRA})},
  pages     = {3299--3304},
  publisher = {{IEEE}},
  year      = {2017},
  url       = {https://doi.org/10.1109/ICRA.2017.7989376},
  doi       = {10.1109/ICRA.2017.7989376},
}

@ARTICLE{nunes2026safe,
  author={Nunes, Arthur H. D. and Gonçalves, Vinicius M. and Pimenta, Luciano C. A.},
  journal={IEEE Robotics and Automation Letters}, 
  title={Safe Vector Field for Robot Navigation in $n$-Dimensions}, 
  year={2026},
  volume={11},
  number={3},
  pages={3071-3078},
  doi={10.1109/LRA.2026.3655306}}

@ARTICLE{11358638,
  author={Gonçalves, Vinicius M. and Wei, Shiqing and de Souza, Eduardo Malacarne Soeiro and Krishnamurthy, Prashanth and Tzes, Anthony and Khorrami, Farshad},
  journal={IEEE Robotics and Automation Letters}, 
  title={A Differentiable Distance Metric for Robotics Through Generalized Alternating Projection}, 
  year={2026},
  volume={11},
  number={3},
  pages={2889-2896},
  doi={10.1109/LRA.2026.3655272}}

@INPROCEEDINGS{crazysim,
  author={Llanes, Christian and Kakish, Zahi and Williams, Kyle and Coogan, Samuel},
  booktitle={2024 IEEE International Conference on Robotics and Automation (ICRA)}, 
  title={CrazySim: A Software-in-the-Loop Simulator for the Crazyflie Nano Quadrotor}, 
  year={2024},
  volume={},
  number={},
  pages={12248-12254},
  doi={10.1109/ICRA57147.2024.10610906}}

@ARTICLE{BrescianiniNonlinearController2013,
               title={Nonlinear quadrocopter attitude control},
               author={Brescianini, Dario and Hehn, Markus and D'Andrea, Raffaello},
               year={2013},
               publisher={ETH Zurich}}

@article{Smeur2016,
  title = {Adaptive Incremental Nonlinear Dynamic Inversion for Attitude Control of Micro Air Vehicles},
  volume = {39},
  ISSN = {1533-3884},
  url = {http://dx.doi.org/10.2514/1.G001490},
  DOI = {10.2514/1.g001490},
  number = {3},
  journal = {Journal of Guidance,  Control,  and Dynamics},
  publisher = {American Institute of Aeronautics and Astronautics (AIAA)},
  author = {Smeur,  Ewoud J. J. and Chu,  Qiping and de Croon,  Guido C. H. E.},
  year = {2016},
  month = Mar,
  pages = {450–461}
}

@INPROCEEDINGS{ferreira2026pointtocloudnmpcsmoothavoidance,
      title={Point-to-Cloud NMPC with Smooth Avoidance Constraints}, 
      author={Brener G. Ferreira and Vinicius M. Gonçalves and Marcelo A. Santos and Guilherme V. Raffo},
    booktitle={2026 European Control Conference (ECC)},
      year={2026},
      volume={},
      number={},
      pages={3346-3351},      
}

@article{raffo2015robust,
author = {Raffo, Guilherme V. and Ortega, Manuel G. and Rubio, Francisco R.},
title = {Robust Nonlinear Control for Path Tracking of a Quad-Rotor Helicopter},
journal = {Asian Journal of Control},
volume = {17},
number = {1},
pages = {142-156},
doi = {https://doi.org/10.1002/asjc.823},
url = {https://onlinelibrary.wiley.com/doi/abs/10.1002/asjc.823},
eprint = {https://onlinelibrary.wiley.com/doi/pdf/10.1002/asjc.823},
year = {2015}
}

@misc{nunes2026saferobusttubebasedpathfollowing,
      title={Safe and robust tube-based path-following for robot navigation}, 
      author={Arthur H. D. Nunes and Vinicius M. Gonçalves and Guilherme V. Raffo and Leonardo A. B. Torres and Luciano C. A. Pimenta},
      year={2026},
      eprint={2608.02530},
      archivePrefix={arXiv},
      primaryClass={eess.SY},
      url={https://arxiv.org/abs/2608.02530}, 
}

\end{document}